\documentclass[11pt]{article}

\usepackage[top=1in,bottom=1in,left=1in,right=1in]{geometry}
\usepackage{natbib}
\usepackage[unicode=true,bookmarks=true,bookmarksnumbered=false,bookmarksopen=false,breaklinks=false,pdfborder={0 0 1},backref=false,colorlinks=true]{hyperref}
\hypersetup{citecolor=blue}
\usepackage{url}

\usepackage{amsmath}
\usepackage{mathrsfs}
\usepackage{amssymb}
\usepackage{amsthm}
\usepackage[normalem]{ulem}

\newtheorem{theorem}{Theorem}

\newtheorem{assumption}{Assumption}

\usepackage{float}
\usepackage{rotfloat}

\floatstyle{ruled}
\newfloat{algorithm}{tpb}{loa}
\providecommand{\algorithmname}{Algorithm}
\floatname{algorithm}{\protect\algorithmname}

\usepackage{algpseudocode,algorithm}
\algnewcommand\algorithmicinput{\textbf{Input}:}
\algnewcommand\algorithmicoutput{\textbf{Output}:}
\algnewcommand\INPUT{\item[\algorithmicinput]}
\algnewcommand\OUTPUT{\item[\algorithmicoutput]}

\usepackage{graphicx}
\usepackage[caption=false,labelformat=simple]{subfig}

\graphicspath{}

\usepackage{array}
\newcolumntype{L}[1]{>{\raggedright\let\newline\\\arraybackslash\hspace{0pt}}m{#1}}
\newcolumntype{C}[1]{>{\centering\let\newline\\\arraybackslash\hspace{0pt}}m{#1}}
\newcolumntype{R}[1]{>{\raggedleft\let\newline\\\arraybackslash\hspace{0pt}}m{#1}}

\usepackage{rotating}
\usepackage{multirow}

\usepackage{tikz}
\usetikzlibrary{shapes,arrows,backgrounds,calc,positioning,fit,petri,plotmarks}
\usetikzlibrary{arrows}

\usepackage{enumerate}
\usepackage{paralist}

\newcommand*{\affaddr}[1]{#1} 
\newcommand*{\affmark}[1][*]{\textsuperscript{#1}}

\global\long\def\bx{\mathbf{x}}
\global\long\def\bY{\mathbf{Y}}
\global\long\def\by{\mathbf{y}}

\global\long\def\bD{\mathbf{D}}
\global\long\def\bA{\mathbf{A}}

\global\long\def\bC{\mathbf{C}}
\global\long\def\bH{\mathbf{H}}

\global\long\def\bw{\mathbf{w}}
\global\long\def\bU{\mathbf{U}}

\global\long\def\bV{\mathbf{V}}

\global\long\def\bQ{\mathbf{Q}}

\global\long\def\bS{\mathbf{S}}

\global\long\def\bK{\mathbf{K}}

\global\long\def\matzero{\boldsymbol{\mathrm{0}}}
\global\long\def\matI{\boldsymbol{\mathrm{I}}}

\global\long\def\bbeta{\boldsymbol{\beta}}

\global\long\def\bSigma{\boldsymbol{\Sigma}}
\global\long\def\bgamma{\boldsymbol{\gamma}}

\global\long\def\bOmega{\boldsymbol{\Omega}}
\global\long\def\bPi{\boldsymbol{\Pi}}
\global\long\def\bpi{\boldsymbol{\pi}}
\global\long\def\bLambda{\boldsymbol{\Lambda}}

\global\long\def\bxi{\boldsymbol{\xi}}
\global\long\def\bGamma{\boldsymbol{\Gamma}}
\global\long\def\bTheta{\boldsymbol{\Theta}}
\global\long\def\bvartheta{\boldsymbol{\vartheta}}

\global\long\def\diag{\mathrm{diag}}

\usepackage{xr}
\makeatletter
\newcommand*{\addFileDependency}[1]{
  \typeout{(#1)}
  \@addtofilelist{#1}
  \IfFileExists{#1}{}{\typeout{No file #1.}}
}
\makeatother
 
\newcommand*{\myexternaldocument}[1]{%
    \externaldocument{#1}%
    \addFileDependency{#1.tex}%
    \addFileDependency{#1.aux}%
}

\myexternaldocument{}

\title{Multilevel Regression Modeling of Covariance Matrix Outcomes}

\author{%
    Michelle Murphy Green\affmark[1], Xi Luo\affmark[2], Brian S. Caffo\affmark[3], and Yi Zhao\affmark[1*]
    \\
    \affaddr{\affmark[1]Department of Biostatistics and Health Data Science, \\ Indiana University School of Medicine} \\
    \affaddr{\affmark[2]Department of Biostatistics and Data Science,\\  The University of Texas 
Health Science Center at Houston} \\
    \affaddr{\affmark[3]Department of Biostatistics, Johns Hopkins Bloomberg School of Public Health} \\
}

\date{}

\begin{document}

\maketitle

\thispagestyle{empty}

\begin{abstract}
Covariance matrix outcomes arise naturally in neuroimaging experiments to study brain functional connectivity. It is also of interest to understand how brain network organization varies with subject-level covariates. Existing covariance regression methods operate in a single-level framework and do not accommodate the hierarchically nested data structure in which subjects are grouped into clusters, such as age cohorts in lifespan studies. A Multilevel Covariate-Assisted Principal Regression (MCAP) framework is introduced, which identifies, for each cluster, a linear projection such that a generalized linear mixed effects model can be formulated with the covariates. The cluster-specific projections are modeled on the unit sphere via a von Mises-Fisher distribution, enabling principled borrowing of information across clusters. Model parameters are estimated by maximizing a hierarchical likelihood. For inference, a two-stage bootstrap procedure is proposed. Asymptotic properties of the estimators are established. Simulation studies demonstrate that MCAP substantially outperforms single-level competitors in estimating regression coefficients. Applied to the Human Connectome Project Lifespan Study spanning ages from five to ninety, MCAP identifies a dominant spectral brain network capturing age and sex effects on functional connectivity, and reveals findings including the convergence of neural reorganization patterns in late adulthood and the coordinated lifespan modulation of cross-network regions linked to language and executive function.
\end{abstract}



\clearpage
\setcounter{page}{1}

\section{Introduction}
\label{sec:intro}

Understanding how the human brain is organized and interconnected is necessary to understand cognition and behavior. From neuron-to-neuron synapses to neural pathways between regions to whole-brain networks, this complex system undergoes continuous restructuring throughout our lives. One method used to make sense of this system is functional magnetic resonance imaging (fMRI), in which blood flow in the brain is used to identify brain activity. Previous fMRI research has found functional connectivity (FC) to be constantly adapting and changing throughout the lifespan, with a growth curve that peaks during the late third and fourth decades~\citep{sun2025human}. In essence, FC in late degeneration stages may mirror patterns observed in early development. Research into neuroplasticity, or the brain's ability to reorganize and restructure in response to stressors, supports this variation in FC across the lifespan. Specifically, there is a U-shaped pattern in the need for functional brain network change, with greater demand during adolescence and late adulthood and comparatively little FC change during adulthood~\citep{saberi2021requirement}. The increased demand for network restructuring during late adulthood likely stems from the breaking down of previously constructed systems and an attempt to sustain and maintain FC. This is further emphasized by a corresponding bell-shaped curve in grey matter volume across the lifespan~\citep{bethlehem2022brain}.
However, development varies by region with certain regions such as default mode network (DMN) and frontoparietal (FP) regions showing comparatively greater changes through rapid development in early life followed by rapid decline during aging~\citep{supekar2010development, ng2024frontoparietal}. While the understanding of global FC changes across the lifespan has improved greatly in the past decade, the next step is to uncover region-specific changes and their impact on the developing brain. 

Sex differences have also been studied, although the findings are not fully consistent. Some studies have identified higher global FC in males than in females~\citep{zhang2016sex, zhang2024different}. Others have found regional differences in which FC is occasionally higher in females, with the affected regions depending on menopausal state~\citep{kilpatrick2025differential}. Sex can also be predicted from FC, predominantly in the DMN~\citep{zhang2018functional}. Furthermore, sex differences have been reported in the organization of brain networks, with male networks showing stronger intrahemispheric connections and female networks showing stronger interhemispheric connections~\citep{ingalhalikar2014sex}. As such, sex differences in FC and brain networks remain an important topic, and these findings highlight the importance of accounting for sex in research questions and further exploring these differences in future work.

The Human Connectome Project (HCP), launched in 2010, is a large-scale coordinated effort to comprehensively map the structural and functional connectivity of the human brain, aiming to elucidate the organization of neural circuits and their relationship to behavior. Utilizing advanced techniques in neuroimaging and combining three cohorts, HCP Development (HCP-D), HCP Young Adult (HCP-YA), and HCP Aging (HCP-A), it enables the investigation of brain connectivity from more than $3,000$ subjects spanning ages from five to ninety on a scale not previously available~\citep{van2013wu, elam2021human}. With these cross-sectional cohorts, population-level profiles along the lifespan can be postulated by creating age-defined clusters. This hierarchically nested data structure motivates the development of a multilevel modeling framework proposed in this manuscript.

The statistical analysis of FC from fMRI studies centers on the covariance matrix of time courses extracted from brain regions of interest~\cite[ROIs,][]{biswal1995functional}. A covariance matrix outcome, however, presents fundamental modeling challenges: it lives on the cone of positive semidefinite matrices, has $p(p+1)/2$ unique entries that grow quadratically in dimension, and its entries exhibit complex within-matrix dependence. A common simplification is to vectorize the lower-triangular of the covariance matrix and apply standard multivariate analysis. However, this ignores the positive-definiteness constraint and typically requires a large number of parameters relative to sample size, resulting in reduced statistical power. 
An alternative that has gained traction focuses on modeling the full covariance matrix as the outcome while preserving positive definiteness. Such approaches directly model the covariance matrix as a function of covariates through linear-combination models~\citep{anderson1973asymptotically}, matrix-logarithm regression~\citep{chiu1996matrix}, quadratic covariance regression and its extensions~\citep{hoff2012covariance,fox2015bayesian,seiler2017multivariate}, similarity-matrix formulations~\citep{zou2017covariance}, Bayesian hierarchical models for heterogeneous covariance matrices~\citep{acharyya2023bayesian}, common-diagonalization approaches based on eigendecomposition~\citep{flury1984common,boik2002spectral,hoff2009hierarchical,franks2019shared}, and Cholesky-based parameterizations for ordered outcomes~\citep{pourahmadi2007simultaneous}. More recent work has extended these ideas to high-dimensional settings through sparse covariance regression and covariate-dependent Cholesky decompositions, improving scalability and interpretability while continuing to face tradeoffs among flexibility, computational tractability, and guaranteed positive definiteness~\citep{kim2025high,kim2026covariate}. 
Another direction is to seek low-dimensional representations of covariance variation rather than modeling every matrix entry directly. In particular, covariate-assisted principal (CAP) regression identifies projection directions such that the projected variance follows a generalized linear model in the covariates, thereby targeting interpretable covariance components associated with subject-level characteristics~\citep{zhao2021covariate}. This projection-based strategy is computationally attractive and avoids the multiplicity and interpretability difficulties of full-matrix regression. \citet{zhao2024longitudinal} extended this framework to repeated covariance matrices through a mixed-effects model that captures within-subject variation and accommodates time-varying predictors, while assuming the projection direction to be common across subjects and time points. Related multilevel covariance regression and factor-analytic formulations have also been proposed for hierarchical data~\citep{orindi2023combined}, but do not allow the covariance-associated subspace itself to vary across clusters. 
Recent network-based methods in neuroimaging have emphasized that connectivity edges should not be treated as independent and that preserving higher-order network structure can materially improve statistical power and interpretability~\citep{wang2026latent}. 

By defining age clusters, the HCP Lifespan data are framed with a multilevel structure. We therefore propose a hierarchical extension of the CAP approach, named Multilevel Covariate-Assisted Principal Regression (MCAP), that allows the projection direction to vary across clusters. Specifically, cluster-specific projection vectors are modeled as random directions on the unit sphere through a von Mises-Fisher distribution within a hierarchical modeling framework. This formulation preserves the interpretability of CAP through projected covariance components while relaxing the restrictive common-projection assumption in existing longitudinal principal regression literature. The proposed framework accommodates between-cluster heterogeneity in covariance-associated subspaces, offering a more flexible framework for multilevel covariance matrix outcomes. 

The rest of the manuscript is organized as follows. Section~\ref{sec:motivation} introduces the motivating HCP Lifespan Study. The proposed multilevel covariance matrix modeling framework is presented in Section~\ref{sec:model}, together with the estimating algorithm, inference strategy, and asymptotic properties. Section~\ref{sec:sim} illustrates the performance of the proposed framework through simulation studies. Section~\ref{sec:hcp} revisits the HCP study and demonstrates the constructed lifespan profile of brain functional connectivity and corresponding variation in brain subnetworks across age clusters. Finally, Section~\ref{sec:discussion} summarizes the manuscript with discussions.


\section{Motivation: The Human Connectome Project Lifespan Study}
\label{sec:motivation}


Brain reserve and cognitive reserve are known to be complementary mechanisms that assist the brain in resisting damage from aging or disease. Brain reserve reflects the structural neurobiological capacity, such as brain volume and synaptic density. Conversely, cognitive reserve reflects the brain's utilization of brain networks and the functional adaptability that enables individuals to maintain cognitive performance despite pathology~\citep{stern2002cognitive,stern2020whitepaper}. As such, cognitive reserve is a key determinant of resilience to brain aging and neurodegernation and individuals with diminished cognitive reserve are at increased risk for age-related cognitive decline and dementia~\citep{livingston2020dementia}. It is known that cognitive reserve is shaped by dynamic interactions between biological, cognitive, and social factors across the lifespan, including early-life adversity and social determinants of health~\citep{luby2013effects,kaup2014older,saberi2021requirement,sun2025human}. However, its lifespan trajectories and underlying mechanisms remain poorly understood. This is largely due to the lack of harmonized individual-level longitudinal data across the various stages of the life course, and therefore continues to be a fundamental limitation in determining lifespan trajectories of cognitive reserve~\citep{bethlehem2022brain,mousley2025topological}. 

The Human Connectome Project (HCP) Lifespan Studies provide a compelling setting to address this persistent, longitudinal gap in cognitive reserve research. Building on the original HCP Young-Adult (HCP-YA) cohort, which established high-quality protocols for mapping structural and functional brain connectivity, the Lifespan extension broadened this framework to HCP-Development (HCP-D) and HCP-Aging (HCP-A), thereby covering childhood, adolescence, adulthood, and older age with broadly harmonized imaging and preprocessing pipelines~\citep{van2013wu,harms2018extending,bookheimer2019lifespan,somerville2018lifespan}. Across these cohorts, participants undergo multimodal neuroimaging, including structural MRI, diffusion MRI, resting-state fMRI, and task fMRI, together with extensive cognitive, behavioral, and health assessments. These data create a unique opportunity to investigate how cognitive reserve measurements vary with age and other subject-level characteristics across lifespan.

In this study, we focus on the construction of brain functional connectivity profile over the lifespan. 
For each individual, subject-specific covariance or connectivity matrices are derived from resting-state fMRI scans that encode coordinated activity among brain regions. These matrices are likely to vary systematically with developmental stage, aging status, sex, cognition, and related reserve-associated factors. At the same time, the scientific question is not only whether the magnitude of connectivity differs with age, but also whether the brain subnetwork, represented by covariance-associated subspace, itself changes across life stages. This consideration motivates the use of age-defined clusters, such as developmental, young-adult, and aging groups, as a principled way to capture major differences in neurobiological context, reserve formation, and reserve depletion. Such clustering is scientifically meaningful because the mechanisms governing covariance structure are unlikely to be exchangeable across childhood neurodevelopment, mature adult maintenance, and late-life aging, and it is statistically useful because it allows the model to pool information within relatively homogeneous age strata while permitting structured heterogeneity across strata.

Under this consideration, a common linear projection for all individuals, as assumed in \citet{zhao2024longitudinal}, may be overly restrictive. Across age groups, covariance components may vary in orientation reflecting different latent subnetworks involved in reserve accumulation, maintenance, or decline at different stages of life. This motivates a multilevel covariance regression framework in which age-defined clusters serve as higher-level units and the covariance components are allowed to vary across clusters. Such a model separates population-level covariate effects from cluster-specific deviations in covariance geometry, thus better accommodates lifespan heterogeneity in brain functional organization. The HCP Lifespan Study offers an ideal opportunity for the proposed framework as it provides harmonized neuroimaging data across broad age ranges and a scientifically grounded rationale for modeling structured variation across age-defined clusters.

\section{Model and Methods}
\label{sec:model}

This section introduces the Multilevel Covariate-Assisted Principal Regression (MCAP) framework. Section~\ref{sub:method} specifies the model and derives the hierarchical likelihood. Section~\ref{sub:algorithm} presents the block coordinate-descent algorithm for parameter estimation and the strategy for identifying multiple components. Section~\ref{sub:inference} describes the two-stage bootstrap procedure for inference on the regression parameters. Section~\ref{sub:asmp} establishes the asymptotic properties of the estimators.

Consider a hierarchically nested dataset with $m$ clusters at the first level. Assume that, in cluster $i$, the number of units is $n_{i}$, for $i=1,\dots,m$. Let $\by_{ijt}\in\mathbb{R}^{p}$, a $p$-dimensional vector, denote the $t$th sample of unit $j$ in cluster $i$, for $j=1,\dots,n_{i}$ and $t=1,\dots,T_{ij}$, where $T_{ij}$ is the number of samples. It is assumed that $\by_{ijt}$ follows a multivariate normal distribution with mean zero and covariance matrix $\bSigma_{ij}$. Without loss of generality, the distribution mean is assumed to be zero to focus on the study of heterogeneity in the covariance matrices. In practice, one can center the data to zero to make the assumption valid. The heterogeneity of the covariance matrices is associated with cluster and/or individual characteristics.
Let $\bx_{1i(j)}\in\mathbb{R}^{q_{1}}$ denote the $q_{1}$-dimensional fixed-effect covariate of cluster $i$ (unit $j$) and $\bx_{2ij}\in\mathbb{R}^{q_{2}}$ denote the $q_{2}$-dimensional random-effect covariate of unit $j$ in cluster $i$. The covariate $\bx_{1i(j)}$ can be either a cluster-level covariate ($\bx_{1i}$) or a unit-level covariate ($\bx_{1ij}$). Denote $\bx_{ij}=(1,\bx_{1i(j)}^\top,\bx_{2ij}^\top)^\top\in\mathbb{R}^{q}$, where $q=1+q_{1}+q_{2}$. Assume that, for each cluster, there exists a linear rotation with unit $\ell_{2}$-norm, denoted as $\bgamma_{i}\in\mathbb{R}^{p}$ and $\|\bgamma_{i}\|_{2}=1$. Under this rotation, the following model is satisfied,
\begin{equation}\label{eq:model}
  \log(\bgamma_{i}^\top\bSigma_{ij}\bgamma_{i}) = \beta_{0}+\bx_{1i(j)}^\top\bbeta_{1}+\bx_{2ij}^\top(\bbeta_{2}+\bvartheta_{i})+\varepsilon_{i},
\end{equation}
where $\bbeta=(\beta_{0},\bbeta_{1}^\top,\bbeta_{2}^\top)^\top\in\mathbb{R}^{q}$ is the model coefficient, $\varepsilon_{i}$ is the random component normally distributed with mean zero and variance $\sigma^{2}$, and $\bvartheta_{i}$ is the random-effect coefficient assumed to be normally distributed with mean zero and covariance matrix $\bOmega\in\mathbb{R}^{q_{2}\times q_{2}}$. Denote $\beta_{0i}\equiv\beta_{0}+\varepsilon_{i}$ and $\bbeta_{2i}\equiv\bbeta_{2}+\bvartheta_{i}$. $\beta_{0i}$ is the random intercept and $\bbeta_{2i}$ is the random slope corresponding to those random-effect covariates.
Because the linear rotation $\bgamma_{i}$ satisfies $\|\bgamma_{i}\|_{2}=1$, a spherical distribution is considered. Specifically, $\bgamma_{i}$ is assumed to follow a von Mises-Fisher distribution with mean direction $\bgamma\in\mathbb{R}^{p}$ ($\|\bgamma\|_{2}=1$) and concentration $\kappa>0$. Throughout the model and likelihood, $\bgamma_{i}$ denotes the Euclidean unit-norm direction. In the computation below, an auxiliary working vector $\tilde{\bgamma}_{i}$ is used to impose a data-adaptive normalization, and the direction entering the likelihood is obtained by rescaling $\tilde{\bgamma}_{i}$ to unit Euclidean norm.


\subsection{Methods}
\label{sub:method}

Under these distributional assumptions, for a multilevel model like~\eqref{eq:model}, one can estimate the parameters by maximizing the hierarchical likelihood. For $\bgamma_{i}$'s, a von Mises-Fisher distribution is assumed. The likelihood function is
\begin{equation}
  f(\bgamma_{i};\bgamma,\kappa)=C_{p}(\kappa)\exp\left(\kappa\bgamma^\top\bgamma_{i}\right),
\end{equation}
where
\begin{equation}
  C_{p}(\kappa)=\frac{\kappa^{p/2-1}}{(2\pi)^{p/2}I_{p/2-1}(\kappa)},
\end{equation}
and $I_{\nu}(\cdot)$ denotes the modified Bessel function of the first kind at order $\nu$. Up to an additive constant, the negative hierarchical likelihood under model~\eqref{eq:model} is
\begin{eqnarray}
  \ell &=& \sum_{i=1}^{m}\sum_{j=1}^{n_{i}}\frac{T_{ij}}{2}\left\{(\beta_{0i}+\bx_{1i(j)}^\top\bbeta_{1}+\bx_{2ij}^\top\bbeta_{2i})+(\bgamma_{i}^\top\bS_{ij}\bgamma_{i})\exp(-\beta_{0i}-\bx_{1i(j)}^\top\bbeta_{1}-\bx_{2ij}^\top\bbeta_{2i})\right\} \nonumber \\
  && +\sum_{i=1}^{m}\left\{\frac{1}{2}\log\sigma^{2}+\frac{(\beta_{0i}-\beta_{0})^{2}}{2\sigma^{2}}\right\}+\sum_{i=1}^{m}\left\{\frac{1}{2}\log|\bOmega|+\frac{1}{2}(\bbeta_{2i}-\bbeta_{2})^\top\bOmega^{-1}(\bbeta_{2i}-\bbeta_{2})\right\} \nonumber \\
  && +\sum_{i=1}^{m}\left\{-\log{C_{p}(\kappa)}-\kappa(\bgamma^\top\bgamma_{i})\right\}, \label{eq:likelihood}
\end{eqnarray}
where $\bS_{ij}=\sum_{t=1}^{T_{ij}}\by_{ijt}\by_{ijt}^\top/T_{ij}$ is the sample covariance matrix. The first part of the function corresponds to the conditional likelihood of the unit-level observations given $\{\bgamma_{i},\beta_{0i},\bbeta_{2i}\}_{i}$, the second part corresponds to the likelihood of the random intercepts ($\beta_{0i}$), the third corresponds to the likelihood of the random slopes ($\bbeta_{2i}$), and the last part corresponds to the likelihood of the random rotations ($\bgamma_{i}$). We use this hierarchical likelihood rather than marginalizing over $\{\bgamma_{i},\beta_{0i},\bbeta_{2i}\}_{i}$ because the resulting marginal likelihood is analytically and computationally inconvenient. In addition, it has been shown that maximizing the hierarchical likelihood function is asymptotically equivalent to maximizing the standard marginalized likelihood function~\citep{lee1996hierarchical}. 
The equivalence result of \citet{lee1996hierarchical} was established for generalized linear mixed models under the condition that the number of observations per cluster grows, that is $T=\min_{i,j}T_{ij}\rightarrow\infty$. As $T\rightarrow\infty$, the posterior distribution of cluster-level random effects concentrates with increasing precision, and the hierarchical likelihood profile over the random effects converges to the marginal likelihood. See~\citet{lee2006generalized} for a comprehensive discussion of this equivalence in the broader class of hierarchical generalized linear models. To avoid degeneration in the linear rotations, $\bgamma_{i}$, the following optimization problem is proposed to estimate model parameters:
\begin{eqnarray}
  \text{minimize} && \ell, \label{eq:opt}  \\
  \text{such that} && \tilde{\bgamma}_{i}^{\top}\bH_{i}\tilde{\bgamma}_{i}=1 \text{ and } \bgamma_{i}=\tilde{\bgamma}_{i}/\|\tilde{\bgamma}_{i}\|_2,\quad \text{for } i=1,\dots,m, \nonumber
\end{eqnarray}
where $\bH_{i}$ is a positive-definite matrix. The auxiliary vector $\tilde{\bgamma}_{i}$ is introduced only for the data-adaptive normalization in the optimization step. The model parameter remains the Euclidean unit-norm direction, $\bgamma_{i}$, which enters the hierarchical likelihood and is modeled by the von Mises-Fisher distribution.
When $\bH_{i}=\matI_{p}$, a $p$-dimensional identity matrix, the working normalization is equivalent to imposing the unit $\ell_{2}$-norm constraint and recovers the standard PCA normalization. For a likelihood-based optimization problem, \citet{krzanowski1984principal} and \citet{zhao2021covariate} demonstrated that incorporating distributional information into the constraint leads to better estimation performance. As the proposed estimation approach is likelihood based, we set $\bH_{i}=\sum_{j=1}^{n_{i}}T_{ij}\bS_{ij}/\sum_{j=1}^{n_{i}}T_{ij}$, the average sample covariance matrix of cluster $i$, for $i=1,\dots,m$.
Setting $\bH_{i}=\matI_{p}$ treats all directions equally, while setting $\bH_{i}$ to the cluster average sample covariance tilts the working feasible set toward directions along which the data already vary.
Geometrically, the constraint, $\tilde{\bgamma}_{i}^{\top}\bH_{i}\tilde{\bgamma}_{i}=1$, defines an ellipsoid aligned with the data cloud rather than a sphere, so the working update corresponds to a generalized eigenvector problem with respect to the data-adaptive metric $\bH_{i}$ rather than the identity (see also supplementary Section~\ref{appendix:sec:alg_mlcap}). After this working update, $\tilde{\bgamma}_{i}$ is rescaled to $\bgamma_{i}=\tilde{\bgamma}_{i}/\|\tilde{\bgamma}_{i}\|_2$, so the von Mises-Fisher model remains on the Euclidean unit sphere. \citet{krzanowski1984principal} and \citet{zhao2021covariate} showed empirically and theoretically that this data-adaptive normalization substantially reduces estimation error for the projection direction in finite samples.
In practice, $\bH_{i}$ is computed once before the main optimization loop and does not change across iterations.

\subsection{Algorithm}
\label{sub:algorithm}

One advantage of using the hierarchical likelihood is computational convenience. For the likelihood function in~\eqref{eq:likelihood}, a block coordinate-descent algorithm can be used to obtain the solution. Algorithm~\ref{alg:mlcap} summarizes the estimation procedure and supplementary Section~\ref{appendix:sec:alg_mlcap} provides more details. For $\{\beta_{0i}\}$, $\bbeta_{1}$, and $\{\bbeta_{2i}\}$, the updates are obtained using the Newton-Raphson method. For the hyperparameters, $(\beta_{0},\sigma^{2})$ and $(\bbeta_{2},\bOmega)$, explicit solutions for the updates are provided below by minimizing the negative hierarchical likelihood function~\eqref{eq:likelihood} as
\[
  \beta_{0}^{(s+1)}=\frac{1}{m}\sum_{i=1}^{m}\beta_{0i}^{(s+1)}, \quad \sigma^{2(s+1)}=\frac{1}{m}\sum_{i=1}^{m}(\beta_{0i}^{(s+1)}-\beta_{0}^{(s+1)})^{2};
\]
\[
  \bbeta_{2}^{(s+1)}=\frac{1}{m}\sum_{i=1}^{m}\bbeta_{2i}^{(s+1)}, \quad \bOmega^{(s+1)}=\frac{1}{m}\sum_{i=1}^{m}\left(\bbeta_{2i}^{(s+1)}-\bbeta_{2}^{(s+1)}\right)\left(\bbeta_{2i}^{(s+1)}-\bbeta_{2}^{(s+1)}\right)^\top.
\]
For $\bgamma_{i}$, $i=1,\dots,m$, the update is obtained by first solving for the working direction $\tilde{\bgamma}_{i}$,
\begin{eqnarray}
  \text{minimize} && \tilde{\bgamma}_{i}^\top\left\{\sum_{j=1}^{n_{i}}\frac{T_{ij}}{2}\exp(-\beta_{0i}^{(s+1)}-\bx_{1i(j)}^{\top}\bbeta_{1}^{(s+1)}-\bx_{2ij}^\top\bbeta_{2i}^{(s+1)})\bS_{ij} \right\}\tilde{\bgamma}_{i}-\kappa^{(s)}\bgamma^{(s)\top}\tilde{\bgamma}_{i}, \label{eq:opt_gamma} \\
  \text{such that} && \tilde{\bgamma}_{i}^\top\bH_{i}\tilde{\bgamma}_{i}=1. \nonumber
\end{eqnarray}
The unit-norm direction is then set to $\bgamma_{i}^{(s+1)}=\tilde{\bgamma}_{i}^{(s+1)}/\|\tilde{\bgamma}_{i}^{(s+1)}\|_2$. Details of solving the working problem are presented in Section~\ref{appendix:sec:alg_mlcap} of the supplementary materials.
For $\bgamma$ and $\kappa$, let $\bar{\bgamma}=\sum_{i=1}^{m}\bgamma_{i}/m$ and $\bar{R}=\|\bar{\bgamma}\|$, where the average is computed from the unit-norm directions. The updates are
\[
  \bgamma^{(s+1)}=\frac{\bar{\bgamma}^{(s+1)}}{\bar{R}^{(s+1)}}, \quad \kappa^{(s+1)}=\frac{\bar{R}^{(s+1)}(p-\bar{R}^{2(s+1)})}{1-\bar{R}^{2(s+1)}}.
\]
This update is the commonly used approximation to the maximum likelihood update for the von Mises-Fisher concentration parameter, as discussed in~\citet{sra2012short}.
A generic scenario is considered in Algorithm~\ref{alg:mlcap}, where both fixed-effect covariates ($\bx_{1i(j)}$) and random-effect covariates ($\bx_{2ij}$) are present. In the case that the variances in $\bbeta_{2i}$ degenerate to zero, that is, constant slopes are assumed for $\bx_{2ij}$, one can combine $\bx_{1i(j)}$ and $\bx_{2ij}$ and drop corresponding terms from the likelihood function. Analogously, if all covariates have a random slope, terms involving $\bbeta_{1}$ should be removed from the likelihood function.

The block coordinate-descent scheme updates the objective iteratively, and convergence is declared when the relative change in the objective falls below a tolerance across consecutive iterations.
Because the objective function is non-convex in $(\{\bgamma_{i}\},\{\beta_{0i}\},\bbeta_{1},\{\bbeta_{2i}\})$ jointly, the algorithm may converge to a local minimum. To mitigate this, multiple random initializations are used and the solution with the smallest objective value is retained, as indicated in Step~5 of Algorithm~\ref{alg:mlcap}. The dominant computational cost per iteration is the generalized eigenvalue decomposition for each $\bgamma_{i}$ update, which is $O(p^{3})$; the remaining Newton-Raphson and closed-form updates are $O(p^{2})$ per cluster. The total cost per iteration is therefore $O(mp^{3} + M_{n}p^{2})$, where $M_{n}=\sum_{i}\sum_{j}T_{ij}$, which is in general feasible for moderate dimensions, for example, $p \leq 100$.

\begin{algorithm}
  \caption{\label{alg:mlcap}The algorithm for optimization problem~\eqref{eq:opt}.}
  \begin{algorithmic}[1]
    \INPUT $\{(\by_{ij1},\dots,\by_{ijT_{ij}}),\bx_{ij} \mid j=1,\dots,n_{i},~i=1,\dots,m\}$

    \State \textbf{initialization}: $(\{\bgamma_{i}^{(0)}\},\{\beta_{0i}^{(0)}\},\bbeta_{1}^{(0)},\{\bbeta_{2i}^{(0)}\},\beta_{0}^{(0)},\sigma^{2(0)},\bbeta_{2}^{(0)},\bOmega^{(0)},\bgamma^{(0)},\kappa^{(0)})$

    \Repeat \; for iteration $s=0,1,2,\dots$

    \State \; update parameters by solving~\eqref{eq:opt}, denoted as $\{\bgamma_{i}^{(s+1)}\}$, $\{\beta_{0i}^{(s+1)}\}$, $\bbeta_{1}^{(s+1)}$, $\{\bbeta_{2i}^{(s+1)}\}$, $\beta_{0}^{(s+1)}$, $\sigma^{2(s+1)}$, $\bbeta_{2}^{(s+1)}$, $\bgamma^{(s+1)}$, and $\bOmega^{(s+1)},\kappa^{(s+1)}$, respectively,

    \Until{the objective function in~\eqref{eq:opt}} converges;

    \State consider a random series of initializations, repeat Steps 1--4, and choose the results with the minimum objective value.

    \OUTPUT $(\{\hat{\bgamma}_{i}\}, \{\hat{\beta}_{0i}\},\hat{\bbeta}_{1}, \{\hat{\bbeta}_{2i}\},\hat{\beta}_{0},\hat{\sigma}^{2},\hat{\bbeta}_{2},\hat{\bOmega},\hat{\bgamma},\hat{\kappa})$
  \end{algorithmic}
\end{algorithm}

Algorithm~\ref{alg:mlcap} presents the algorithm for identifying the first linear rotation (component). To identify higher-order components, a similar strategy is followed as in~\citet{zhao2021covariate}. Suppose~$(k-1)$ components have been identified, denoted as $\hat{\bGamma}_{i}^{(k-1)}=(\hat{\bgamma}_{i}^{(1)},\dots,\hat{\bgamma}_{i}^{(k-1)})\in\mathbb{R}^{p\times (k-1)}$, for $i=1,\dots,m$ and $k\geq 2$. Algorithm~\ref{alg:mlcap_higher} summarizes the steps of identifying the $k$th component. Step 1 is to first remove the identified components from the outcome data, $\bY_{ij}=(\by_{ij1},\dots,\by_{ijT_{ij}})$. The rank of the resulting data, $\hat{\bY}_{ij}$, is at most $p-(k-1)$, that is, the last $(k-1)$ singular values in $\bD_{ij}^{(k)}$ are zeros. The sample covariance matrices are then rank deficient. Since the optimization is likelihood based, to make the computation tractable, a rank-completion step, Step 3 in Algorithm~\ref{alg:mlcap_higher}, is introduced, where the estimated random intercepts, $\{\hat{\beta}_{0i}^{(1)},\dots,\hat{\beta}_{0i}^{(k-1)}\}$, are used to replace those zero singular values. The resulting data, $\tilde{\bY}_{ij}^{(k)}$, are then used as input to Algorithm~\ref{alg:mlcap} to identify the $k$th component, $\hat{\bgamma}_{i}^{(k)}$. An orthogonal constraint would require the newly identified $\hat{\bgamma}_{i}^{(k)}$ to be orthogonal to the matrix $\hat{\bGamma}_{i}^{(k-1)}$. However, this orthogonal constraint significantly reduces the feasible set and increases the computation complexity. In practice, removing the identified components from the data enforces the desired orthogonality in the deflated representation without imposing the constraint explicitly.

\begin{algorithm}
  \caption{\label{alg:mlcap_higher}The algorithm of identifying the $k$th component for $k\geq 2$.}
  \begin{algorithmic}[1]
    \INPUT $\{\{\bY_{ij}=(\by_{ij1},\dots,\by_{ijT_{ij}}),\bx_{ij}\}_{j}, \hat{\bGamma}_{i}^{(k-1)}, \{\hat{\beta}_{0i}^{(1)},\dots,\hat{\beta}_{0i}^{(k-1)}\} \mid i=1,\dots,m\}$

    \State For $j=1,\dots,n_{i}$, let $\hat{\bY}_{ij}^{(k)}=\bY_{ij}-\bY_{ij}\hat{\bGamma}_{i}^{(k-1)}\hat{\bGamma}_{i}^{(k-1)\top}$.

    \State Apply singular value decomposition on $\hat{\bY}_{ij}^{(k)}$ such that $\hat{\bY}_{ij}^{(k)}=\bU_{ij}^{(k)}\bD_{ij}^{(k)}\bV_{ij}^{(k)\top}$.

    \State Let $\tilde{\bY}_{ij}^{(k)}=\bU_{ij}^{(k)}\tilde{\bD}_{ij}^{(k)}\bV_{ij}^{(k)\top}$ with 
      \[
        \tilde{\bD}_{ij}^{(k)}=\diag\left\{D_{ij1}^{(k)},\dots,D_{ij(p-(k-1))}^{(k)},\sqrt{\exp(\hat{\beta}_{0i}^{(1)})T_{ij}},\dots,\sqrt{\exp(\hat{\beta}_{0i}^{(k-1)})T_{ij}} \right\}.
      \]

    \State Use $\tilde{\bY}_{ij}^{(k)}$ as the input and run Algorithm~\ref{alg:mlcap}.

    \OUTPUT $(\{\hat{\bgamma}_{i}^{(k)}\}, \{\hat{\beta}_{0i}^{(k)}\},\hat{\bbeta}_{1}^{(k)}, \{\hat{\bbeta}_{2i}^{(k)}\},\hat{\beta}_{0}^{(k)},\hat{\sigma}^{2(k)},\hat{\bbeta}_{2}^{(k)},\hat{\bOmega}^{(k)},\hat{\bgamma}^{(k)},\hat{\kappa}^{(k)})$
  \end{algorithmic}
\end{algorithm}

To determine the number of components, the metric of average deviation from diagonality (DfD) proposed in \citet{zhao2021covariate} is adapted to the considered multilevel data setting. Let $\hat{\bGamma}_{i}^{(k)}$ denote the $k$th identified component of cluster $i$, for $i=1,\dots,m$. The metric is defined as 
\begin{equation}
  \text{DfD}(k)=\prod_{i=1}^{m}\prod_{j=1}^{n_{i}}\left[\frac{\det\left\{\diag(\hat{\bGamma}_{i}^{(k)\top}\bS_{ij}\hat{\bGamma}_{i}^{(k)})\right\}}{\det(\hat{\bGamma}_{i}^{(k)\top}\bS_{ij}\hat{\bGamma}_{i}^{(k)})} \right]^{T_{ij}/\sum_{i}\sum_{j}T_{ij}},
\end{equation}
where $\diag(\bA)$ is a diagonal matrix taking the diagonal elements of a square matrix $\bA$ and $\det(\bA)$ is the determinant of $\bA$. For all $i$, if $\hat{\bGamma}_{i}^{(k)}$ diagonalizes all $\bS_{ij}$'s, for $j=1,\dots,n_{i}$, $\text{DfD}(k)=1$, otherwise the value is greater than one based on Hadamard's inequality. In practice, one can set a threshold, for example $\text{DfD}(k)\leq 2$, to determine $k$. This threshold was suggested by \citet{zhao2021covariate} based on simulation studies.

\subsection{Inference}
\label{sub:inference}

In this study, inference focuses on quantifying the uncertainty of regression coefficients, $(\beta_{0},\bbeta_{1},\bbeta_{2})$. Inference on the linear rotations, $\{\bgamma_{i}\}$ and $\bgamma$, is beyond the scope, as these carry inherent sign ambiguity and are treated as structural components for the purpose of regression inference.

A two-stage bootstrap procedure is proposed. The key motivation is the separation of convergence rates established in Theorem~\ref{thm:asym_known}, where the regression parameters converge at rates determined by $m$ and $M_{n}$, while the $\hat{\bgamma}_{i}$'s are consistent (Theorem~\ref{thm:consistency}). When $m/(nT)\rightarrow 0$, the estimation error in $\hat{\bgamma}_{i}$ is asymptotically negligible relative to that of the regression parameters (as shown in the proof of Theorem~\ref{thm:asym_known}). It is therefore valid to fix $\{\hat{\bgamma}_{i}\}$ at their full-data estimates and bootstrap only the regression sub-problem. This avoids re-running the computationally expensive $\bgamma_{i}$ estimation in each bootstrap replicate and matching estimated directions across replicates.
The following by-cluster resampling strategy respects the hierarchical data structure and correctly propagates the dominant source of variability. Bootstrap standard errors and confidence intervals are then obtained from the bootstrap estimates. 

\begin{description}
  \item[Stage 0 (Full estimation).] Run Algorithm~\ref{alg:mlcap} on the full dataset to obtain $\left(\{\hat{\bgamma}_{i}\},\{\hat{\beta}_{0i}\},\hat{\bbeta}_{1},\{\hat{\bbeta}_{2i}\},\right.\\
      \left.\hat{\beta}_{0},\hat{\sigma}^{2},\hat{\bbeta}_{2},\hat{\bOmega},\hat{\bgamma},\hat{\kappa}\right)$.
    Compute the projected sample variances
    \[
      \hat{s}_{ij}=\hat{\bgamma}_{i}^{\top}\bS_{ij}\hat{\bgamma}_{i}, \quad j=1,\dots,n_{i},\;i=1,\dots,m.
    \]
    The estimates $\{\hat{\bgamma}_{i}\}$ and $\{\hat{s}_{ij}\}$ are held fixed throughout the bootstrap.

  \item[Stage 1 (By-cluster bootstrap).] For each replicate $b=1,\dots,B$:
    \begin{enumerate}[(a)]
      \item Resample $m$ clusters with replacement to form a bootstrap sample $\{i_{1}^{*},\dots,i_{m}^{*}\}$.
      \item For each resampled cluster $i_{k}^{*}$, resample $n_{i_{k}^{*}}$ units within the cluster.
      \item Minimize the reduced hierarchical likelihood with $\{\bgamma_{i}\}$ and $\bgamma$ fixed to obtain bootstrap estimates $(\hat{\beta}_{0}^{(b)},\hat{\bbeta}_{1}^{(b)},\hat{\bbeta}_{2}^{(b)},\hat{\sigma}^{2(b)},\hat{\bOmega}^{(b)})$.
    \end{enumerate}
\end{description}


\subsection{Asymptotic properties}
\label{sub:asmp}

This section studies the asymptotic properties of the proposed estimators under two settings: (i) $\{\bgamma_{i}\}$ and $\bgamma$ are known, and (ii) they are unknown and need to be estimated. Following the definition above, $M_{n}=\sum_{i}\sum_{j}T_{ij}$, $T=\min_{i,j}T_{ij}$, $n=\min_{i}n_{i}$, and let a superscript $*$ denote the true parameter value.

\textit{Known $\{\bgamma_{i}\}$ and $\bgamma$.}
When $\{\bgamma_{i}\}$ and $\bgamma$ are given, the projected sample variance $s_{ij}=\bgamma_{i}^{\top}\bS_{ij}\bgamma_{i}$ is directly observable and the model reduces to a log-linear mixed effects model. The asymptotic normality of the MLEs can be derived.

\begin{theorem}
  \label{thm:asym_known}
  Assume $p<T=\min_{i,j}T_{ij}$ and $p$ is fixed. Let $M_{n}$, $T$, and $n$ be as defined above. Suppose $m^{-1}\sum_{i}n_{i}^{-1}\sum_{j}\bx_{1i(j)}\bx_{1i(j)}^{\top}\rightarrow\bQ$ as $m\rightarrow\infty$, where $\bQ$ is positive definite. Define the profile information matrix
  \begin{equation}
    \label{eq:profile_info}
    \mathcal{J}_{n}=\bH_{11}-\sum_{i=1}^{m}\bC_{i}\bD_{i}^{-1}\bC_{i}^{\top},
  \end{equation}
  where $\bH_{11}=\sum_{i,j}(T_{ij}/2)\bx_{1i(j)}\bx_{1i(j)}^{\top}\in\mathbb{R}^{q_{1}\times q_{1}}$, and for each cluster $i=1,\dots,m$,
  \[
    \bC_{i}=\sum_{j=1}^{n_{i}}\frac{T_{ij}}{2}\bx_{1i(j)}\!\left(1,\bx_{2ij}^{\top}\right)\in\mathbb{R}^{q_{1}\times(1+q_{2})},
  \]
  \[
    \bD_{i}=\begin{pmatrix} 
      N_{i}/2+1/\sigma^{2*} & \sum_{j}(T_{ij}/2)\bx_{2ij}^{\top}\\[4pt] \sum_{j}(T_{ij}/2)\bx_{2ij} & \sum_{j}(T_{ij}/2)\bx_{2ij}\bx_{2ij}^{\top}+\bOmega^{*-1}\end{pmatrix}\in\mathbb{R}^{(1+q_{2})\times(1+q_{2})},
  \]
  with $N_{i}=\sum_{j}T_{ij}$. Then, as $m,T\rightarrow\infty$,
  \begin{enumerate}[(i)]
    \item $\hat{\bbeta}_{1}$ is asymptotically normal with $\mathcal{J}_{n}^{1/2}\!\left(\hat{\bbeta}_{1}-\bbeta_{1}^{*}\right)\overset{\mathcal{D}}{\longrightarrow}\mathcal{N}(\matzero,\matI_{q_{1}})$.
  \end{enumerate}
  Furthermore, if $m/(nT)\rightarrow 0$ as $m,n,T\rightarrow\infty$, then
  \begin{enumerate}[(i)]
    \setcounter{enumi}{1}
    \item $\displaystyle\sqrt{m}\left(\hat{\beta}_{0}-\beta_{0}^{*}\right)\overset{\mathcal{D}}{\longrightarrow}\mathcal{N}(0,\sigma^{2*})$;
    \item $\displaystyle\sqrt{m}\left(\hat{\sigma}^{2}-\sigma^{2*}\right)\overset{\mathcal{D}}{\longrightarrow}\mathcal{N}(0,2\sigma^{4*})$;
    \item $\displaystyle\sqrt{m}\left(\hat{\bbeta}_{2}-\bbeta_{2}^{*}\right)\overset{\mathcal{D}}{\longrightarrow}\mathcal{N}(\matzero,\bOmega^{*})$;
    \item $\displaystyle\sqrt{m}\,\mathrm{vec}\!\left(\hat{\bOmega}-\bOmega^{*}\right)\overset{\mathcal{D}}{\longrightarrow}\mathcal{N}\{\matzero,(\matI_{q_{2}^{2}}+\bK)(\bOmega^{*}\otimes\bOmega^{*})\}$, where $\matI_{q_{2}^{2}}$ is the $q_{2}^{2}$-dimensional identity matrix and $\bK$ is the $q_{2}^{2}\times q_{2}^{2}$ commutation matrix.
  \end{enumerate}
\end{theorem}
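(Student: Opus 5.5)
With $\{\bgamma_i\}$ known, $s_{ij}=\bgamma_i^\top\bS_{ij}\bgamma_i$ satisfies $T_{ij}s_{ij}\mid(\beta_{0i},\bbeta_{2i})\sim \exp(\eta_{ij})\chi^2_{T_{ij}}$ with $\eta_{ij}=\beta_{0i}+\bx_{1i(j)}^\top\bbeta_1+\bx_{2ij}^\top\bbeta_{2i}$. I would therefore work with $z_{ij}=\log s_{ij}$. As $T\to\infty$ this is approximately Gaussian: $z_{ij}=\eta_{ij}+e_{ij}$ with $\mathrm{E}e_{ij}=O(T_{ij}^{-1})$ and $\mathrm{Var}\,e_{ij}=2/T_{ij}+O(T_{ij}^{-2})$.

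The first-unit block of $\ell$ is a Gamma log-likelihood in $\eta_{ij}$. Its score at the truth is $(T_{ij}/2)(1-s_{ij}e^{-\eta_{ij}})$, which has mean zero and variance $T_{ij}/2$, and its Hessian is $(T_{ij}/2)s_{ij}e^{-\eta_{ij}}=(T_{ij}/2)(1+O_p(T_{ij}^{-1/2}))$. So, to leading order, the h-likelihood equations for $(\bbeta_1,\{\beta_{0i},\bbeta_{2i}\})$ are weighted least squares (Henderson's mixed model equations). The weights are $T_{ij}/2$, and the random-effect penalties are $1/\sigma^{2*}$ and $\bOmega^{*-1}$. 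This is exactly the structure encoded in $\bH_{11}$, $\bC_i$, $\bD_i$.

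\textbf{Step 1 (part (i)).} Take a one-step Taylor expansion of the joint estimating equations around the truth, with the hyperparameters held at $(\sigma^{2*},\bOmega^*)$; plugging in consistent estimates changes nothing at first order. The joint Hessian is block-arrowhead: $\bH_{11}$ for $\bbeta_1$, block-diagonal $\bD_i$ for the cluster effects, and cross blocks $\bC_i$. Eliminating the cluster blocks by the Schur complement gives
\[
\hat\bbeta_1-\bbeta_1^*=\mathcal{J}_n^{-1}\Big(U_1-\sum_i\bC_i\bD_i^{-1}U_{2i}\Big)+o_p(\mathcal{J}_n^{-1/2}),
\]
where $U_1$ and $U_{2i}$ are the scores. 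The profile score is a sum over $(i,j)$ of independent terms once one conditions on the random effects. The random-effect penalty part contributes to the variance too, and that contribution should reproduce $\mathcal{J}_n$ (the BLUP identity). A Lindeberg CLT for this triangular array then gives (i). The condition on $\bQ$ guarantees that $\mathcal{J}_n$ grows like $mnT$ and is nondegenerate, so the Lindeberg condition holds. The remainder terms come from the $O_p(T^{-1/2})$ Hessian fluctuation and the $O(T^{-1})$ bias of $\log\chi^2$. I would check that they are $o_p(1)$ after scaling by $\mathcal{J}_n^{1/2}$; this is where $T\to\infty$ enters.

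\textbf{Step 2 (parts (ii)--(v)).} The hyperparameter updates are closed-form sample moments of the predicted random effects. Write $\hat\beta_{0i}=\beta_{0i}+\delta_{0i}$ and $\hat\bbeta_{2i}=\bbeta_{2i}+\bdelta_{2i}$, where $\delta$ is prediction error of order $O_p((n_iT)^{-1/2})$, plus the shrinkage bias $\bD_i^{-1}$ times the penalty, which is $O((nT)^{-1})$. Then
\[
\sqrt m(\hat\beta_0-\beta_0^*)=m^{-1/2}\sum_i\varepsilon_i+m^{-1/2}\sum_i\delta_{0i}.
\]
The first term is $\mathcal N(0,\sigma^{2*})$ by the classical CLT. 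The second has variance $O(1/(nT))$ and bias $O(\sqrt m/(nT))$, both of which vanish under $m/(nT)\to0$. The same argument gives (iv). For (iii) and (v), expand the sample covariance of $\hat\bbeta_{2i}$ into the oracle term $m^{-1}\sum_i(\bvartheta_i-\bar\bvartheta)(\bvartheta_i-\bar\bvartheta)^\top$, cross terms, and a $\delta\delta^\top$ term. The oracle term has the Wishart limit $(\matI+\bK)(\bOmega^*\otimes\bOmega^*)$, and $2\sigma^{4*}$ in the scalar case. The cross terms are $O_p((nT)^{-1/2})$. The $\delta\delta^\top$ term has mean $O(1/(nT))$, so $\sqrt m$ times it vanishes exactly when $m/(nT)\to0$.

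\textbf{Main obstacle.} I expect the hardest part to be controlling the $\delta_{\cdot i}$ uniformly enough that their accumulated bias is $o(m^{-1/2})$. This requires uniform-in-$i$ bounds on the Newton-step remainder and on the $\log\chi^2$ bias, together with the dependence of $\hat\bbeta_{\cdot i}$ on $\hat\bbeta_1$; the latter is $O_p((mnT)^{-1/2})$ and hence harmless. I would handle it with a second-order expansion of each cluster's estimating equation, bounding the remainder by $\max_i$ moment bounds of $s_{ij}e^{-\eta_{ij}}-1$. Because $p<T$ is fixed, the chi-square tails are uniform, and these bounds are available.
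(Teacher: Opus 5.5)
Your architecture is the paper's. For (i) you use a one-step expansion of the joint score, Schur-complement elimination of $(\beta_{0i},\bbeta_{2i})$ with $\mathrm{Var}(\bw_n)=\mathcal{J}_n$ (your BLUP identity), then Lindeberg and Slutsky. For (ii)--(v) you decompose each update into oracle sample moments of $\{\varepsilon_i\}$, $\{\bvartheta_i\}$ plus prediction error killed by $m/(nT)\rightarrow 0$.

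The step that fails is your claim that the $\bQ$ condition makes $\mathcal{J}_n$ grow like $mnT$. You use it to call the feedback of $\hat{\bbeta}_1$ into $\hat{\beta}_{0i},\hat{\bbeta}_{2i}$ an $O_p((mnT)^{-1/2})$, harmless term. But $\bQ$ is a total (between- plus within-cluster) second moment, whereas the Schur complement strips out the between-cluster part of $\bH_{11}$. Take $\bx_{1i(j)}=\bar{\bx}_{1i}$ cluster-level, which the model allows, and $q_2=0$. Then $\bC_i=(N_i/2)\bar{\bx}_{1i}$, $\bD_i=N_i/2+1/\sigma^{2*}$, and $\mathcal{J}_n\rightarrow m\bQ_B/\sigma^{2*}$; this is the paper's own remark (b). So $\hat{\bbeta}_1-\bbeta_1^*$ is only $O_p(m^{-1/2})$. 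Since $\bD_i^{-1}\bC_i^{\top}\approx\bar{\bx}_{1i}^{\top}$, it shifts $\hat{\beta}_0=m^{-1}\sum_i\hat{\beta}_{0i}$ by about $-\overline{\bx}_{1\cdot}^{\top}(\hat{\bbeta}_1-\bbeta_1^*)$, where $\overline{\bx}_{1\cdot}=m^{-1}\sum_i\bar{\bx}_{1i}$. This shift is of the same order as $\bar{\varepsilon}$. Indeed, here $\hat{\beta}_0$ is essentially the OLS intercept from regressing the estimated cluster effects on $(1,\bar{\bx}_{1i})$. Its $\sqrt{m}$-scaled variance exceeds $\sigma^{2*}$ unless $\overline{\bx}_{1\cdot}\rightarrow\matzero$. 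So (ii) does not follow from your argument as written.

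The missing ingredient is the one the paper explicitly invokes: a centered fixed-effect parametrization, i.e. $m^{-1}\sum_i\bar{\bx}_{1i}\rightarrow\matzero$ with $T_{ij}$-weighted cluster means. Under it the $\hat{\bbeta}_1$ feedback averages to $o_p(m^{-1/2})$. Alternatively, you could assume the within-cluster information for $\bx_1$ is of order $M_n$, which your rate claim tacitly presupposes. The same centering is also what makes it legitimate, in Step 1, to hold the mean hyperparameter $\beta_0$ at $\beta_0^*$ ("plugging in consistent estimates changes nothing"). Without centering, profiling out $\beta_0$ as well replaces $\sum_i\bar{\bx}_{1i}\bar{\bx}_{1i}^{\top}$ in the limit of $\mathcal{J}_n$ by its centered version. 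Parts (iii) and (v) survive regardless, because there the feedback enters only through products of two $O_p(m^{-1/2})$ quantities.

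Two smaller points:
\begin{enumerate}[(a)]
\item In the cluster-level regime the profile score is dominated by $\sum_i\bC_i\bD_i^{-1}(\varepsilon_i/\sigma^{2*},(\bOmega^{*-1}\bvartheta_i)^{\top})^{\top}$. So the Lindeberg CLT must be run unconditionally over independent clusters, not over $(i,j)$ conditionally on the random effects.
\item Do not route remainder bounds through $\log s_{ij}$. Its $O(T^{-1})$ bias is common to all units in a cluster, so it would add an $O(\sqrt{m}/T)$ bias to $\sqrt{m}\,\hat{\beta}_0$, which $m/(nT)\rightarrow 0$ does not control. The estimator solves the exactly unbiased Gamma score, whose curvature bias is $O(N_i^{-1})$ per cluster, as you correctly use in Step 2.
\end{enumerate}
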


The proof of the theorem is presented in supplementary Section~\ref{appendix:proof:thm_asym_known}. The profile information $\mathcal{J}_{n}$ in~\eqref{eq:profile_info} is the Schur complement~\citep{horn1990matrix} of the $\bbeta_{1}$ block in the joint Hessian of the hierarchical likelihood with respect to $(\{\beta_{0i}\},\bbeta_{1},\{\bbeta_{2i}\})$, which corresponds to the profile likelihood information for $\bbeta_{1}$ after concentrating out the cluster-level effects~\citep{murphy2000profile}. It accounts for the coupling between the fixed-effect estimate $\hat{\bbeta}_{1}$ and the jointly estimated random effects $\hat{\beta}_{0i}$ and $\hat{\bbeta}_{2i}$. Results (ii)--(v) mirror the classical marginal asymptotic distributions of sample means and sample covariance matrices for Gaussian random variables. The condition $m/(nT)\rightarrow 0$ ensures that the estimation error in the cluster-level effects is negligible relative to the between-cluster variation.

\textit{Unknown $\{\bgamma_{i}\}$ and $\bgamma$.}
When $\{\bgamma_{i}\}$ and $\bgamma$ are unknown and need to be estimated, deriving asymptotic normality is substantially more complex due to the $\ell_{2}$-norm constraint and the non-linear dependence of the likelihood on $\bgamma_{i}$. We establish consistency under an identifiability condition.

\begin{assumption}
  \label{assum:ident}
  Let $\bar{\ell}_{n}(\bTheta)=M_{n}^{-1}\ell(\bTheta)$ denote the normalized negative hierarchical likelihood, where $\bTheta$ collects all regression parameters, variance components, and direction parameters, with a fixed sign convention for the directions. The following conditions hold.
  \begin{description}
    \item[(A1)] The parameter space for $\bTheta$ is compact.
    \item[(A2)] There exists a deterministic population objective $\bar{\ell}(\bTheta)=\operatorname*{plim}\bar{\ell}_{n}(\bTheta)$ such that $\bar{\ell}_{n}$ converges uniformly in probability to $\bar{\ell}$ in a neighborhood of $\bTheta^{*}$.
    \item[(A3)] The true parameter $\bTheta^{*}$ is the unique minimizer of $\bar{\ell}(\bTheta)$.
    \item[(A4)] For each fixed cluster $i$, the direction block is locally identifiable. Specifically, let $N_{i}=\sum_{j=1}^{n_{i}}T_{ij}$ and $\mu_{ij}^{*}=\beta_{0i}^{*}+\bx_{1i(j)}^{\top}\bbeta_{1}^{*}+\bx_{2ij}^{\top}\bbeta_{2i}^{*}$. With
      \[
        \bar{\bA}_{i}^{*}=\lim_{n,T\rightarrow\infty}\frac{1}{2N_{i}}\sum_{j=1}^{n_{i}}T_{ij}\exp(-\mu_{ij}^{*})\bSigma_{ij},
        \qquad
        \bH_{i}^{*}=\lim_{n,T\rightarrow\infty}\frac{1}{N_{i}}\sum_{j=1}^{n_{i}}T_{ij}\bSigma_{ij},
      \]
      the smallest generalized eigenvalue of $(\bar{\bA}_{i}^{*},\bH_{i}^{*})$ is simple. If $\tilde{\bgamma}_{i}^*$ denotes the corresponding generalized eigenvector under the chosen sign convention, then $\tilde{\bgamma}_{i}^*/\|\tilde{\bgamma}_{i}^*\|_2=\bgamma_{i}^*$.
  \end{description}
\end{assumption}

\begin{theorem}
  \label{thm:consistency}
  Let $\hat{\bTheta}$ be a global minimizer of $\bar{\ell}_{n}(\bTheta)$. Under Assumption~\ref{assum:ident} and the conditions of Theorem~\ref{thm:asym_known}, as $m,n,T\rightarrow\infty$,
  \[
    \hat{\bgamma}_{i}\overset{\mathcal{P}}{\longrightarrow}\bgamma_{i}^{*}\;\;\text{for each fixed }i,\quad\hat{\bgamma}\overset{\mathcal{P}}{\longrightarrow}\bgamma^{*},\quad\hat{\kappa}\overset{\mathcal{P}}{\longrightarrow}\kappa^{*},
  \]
  \[
    \hat{\bbeta}_{1}\overset{\mathcal{P}}{\longrightarrow}\bbeta_{1}^{*},\quad\hat{\beta}_{0}\overset{\mathcal{P}}{\longrightarrow}\beta_{0}^{*},\quad\hat{\sigma}^{2}\overset{\mathcal{P}}{\longrightarrow}\sigma^{2*},\quad\hat{\bbeta}_{2}\overset{\mathcal{P}}{\longrightarrow}\bbeta_{2}^{*},\quad\hat{\bOmega}\overset{\mathcal{P}}{\longrightarrow}\bOmega^{*}.
  \]
  The consistency statement for $\hat{\kappa}$ assumes that the von Mises-Fisher concentration is obtained from the exact likelihood equation; the closed-form update in Algorithm~\ref{alg:mlcap} is a numerical approximation to this inverse.
\end{theorem}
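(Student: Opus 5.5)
The proof follows the standard M-estimation consistency argument for extremum estimators, applied to the normalized criterion $\bar{\ell}_{n}$, with the direction block handled separately through (A4).

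\textbf{Step 1: joint consistency of the global parameters.} Since $\hat{\bTheta}$ minimizes $\bar{\ell}_{n}$, we have $\bar{\ell}_{n}(\hat{\bTheta})\le\bar{\ell}_{n}(\bTheta^{*})$. By (A2), $\bar{\ell}_{n}(\bTheta^{*})\to\bar{\ell}(\bTheta^{*})$ in probability. Uniform convergence then gives $\bar{\ell}(\hat{\bTheta})\le\bar{\ell}(\bTheta^{*})+o_{P}(1)$.

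By (A1) the parameter space is compact, and $\bar{\ell}$ is continuous as a uniform limit of continuous functions on a neighborhood. By (A3), for every $\epsilon>0$ there is $\eta>0$ with $\inf_{\|\bTheta-\bTheta^{*}\|\ge\epsilon}\bar{\ell}(\bTheta)\ge\bar{\ell}(\bTheta^{*})+\eta$. This is the well-separated minimum condition, so $\hat{\bTheta}\to\bTheta^{*}$ in probability.

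One technical point: (A2) only gives uniform convergence on a neighborhood of $\bTheta^{*}$. I would first show that $\hat{\bTheta}$ lies in that neighborhood with probability tending to one. The route is a lower bound on $\bar{\ell}_{n}$ outside it. On compact sets the $T_{ij}$-weighted terms dominate, and each summand $\mu+s\,e^{-\mu}$ is strictly convex in $\mu$ with minimum at $\mu=\log s$. If this proves delicate, I would read (A2) as holding globally, as is standard. This gives consistency of $\hat{\bbeta}_{1}$, $\hat{\beta}_{0}$, $\hat{\sigma}^{2}$, $\hat{\bbeta}_{2}$ and $\hat{\bOmega}$.

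\textbf{Step 2: the direction of each fixed cluster $i$.} Profiling out the other blocks, $\hat{\tilde{\bgamma}}_{i}$ solves the problem in~\eqref{eq:opt_gamma} with plug-in parameter estimates.

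After dividing by $N_{i}$, the quadratic form matrix converges to $\bar{\bA}_{i}^{*}$. This uses three facts: $\bS_{ij}\to\bSigma_{ij}$ as $T\to\infty$ with $p$ fixed; the plug-in means $\hat{\mu}_{ij}$ converge, because $\hat{\beta}_{0i}$ and $\hat{\bbeta}_{2i}$ are consistent for a fixed cluster as $n,T\to\infty$ (the Newton equations for the random effects are dominated by the $N_{i}$-scaled data term); and the exponential is continuous. Likewise $\bH_{i}\to\bH_{i}^{*}$ in probability.

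The von Mises--Fisher linear term $\kappa\bgamma^{\top}\tilde{\bgamma}_{i}$ is $O(1)$, so after scaling by $N_{i}^{-1}$ it vanishes. The limiting problem is therefore the generalized eigenproblem for $(\bar{\bA}_{i}^{*},\bH_{i}^{*})$, whose minimizer is simple by (A4).

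The Davis--Kahan theorem, or continuity of simple generalized eigenvectors under perturbation (up to the fixed sign convention), gives $\hat{\tilde{\bgamma}}_{i}\to\tilde{\bgamma}_{i}^{*}$. Normalization is continuous away from zero, so $\hat{\bgamma}_{i}\to\bgamma_{i}^{*}$.

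\textbf{Step 3: the hyperparameters $\bgamma$ and $\kappa$.} Their estimates are continuous functions of the empirical mean of the $\hat{\bgamma}_{i}$'s. By a law of large numbers over clusters, $m^{-1}\sum_{i}\bgamma_{i}^{*}\to A_{p}(\kappa^{*})\bgamma^{*}$, where $A_{p}=I_{p/2}/I_{p/2-1}$.

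To pass from the true directions to the estimates, I would show $m^{-1}\sum_{i}\|\hat{\bgamma}_{i}-\bgamma_{i}^{*}\|\to 0$. Step 2 is per cluster, so this needs uniformity over $i$. I would obtain it by requiring the spectral gaps in (A4) to be bounded away from zero and the convergence to be uniform in $i$, using bounded moments and a union bound with Gaussian concentration of $\bS_{ij}$.

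Then $\bar{R}\to A_{p}(\kappa^{*})$. Since $A_{p}$ is strictly increasing, solving the exact likelihood equation $A_{p}(\hat{\kappa})=\bar{R}$ gives $\hat{\kappa}\to\kappa^{*}$ by the continuous mapping theorem.

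\textbf{Main obstacle.} I expect the hardest part to be the uniformity over clusters in Step 3, together with the effect of the growing number of incidental parameters $\{\beta_{0i},\bbeta_{2i},\bgamma_{i}\}$ on Step 1.
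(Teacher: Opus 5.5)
Your proposal is correct and follows essentially the same route as the paper's proof: the argmin theorem on the full normalized objective, then continuity of the simple smallest generalized eigenvector of $(\bar{\bA}_{i}^{*},\bH_{i}^{*})$ once the $O(1)$ von Mises--Fisher term is negligible after dividing by $N_{i}$, and finally $m^{-1}\sum_{i}\hat{\bgamma}_{i}\to A_{p}(\kappa^{*})\bgamma^{*}$ with continuity of $A_{p}^{-1}$. Your caveats (that (A2) is only local, and that averaging per-cluster consistency needs uniformity over clusters) address points the paper passes over silently; the one fix is that in Step 2 you should take $\hat{\mu}_{ij}\to\mu_{ij}^{*}$ from the joint consistency of Step 1, as the paper does, because the random-effect score equations involve $\hat{\bgamma}_{i}^{\top}\bS_{ij}\hat{\bgamma}_{i}$ and arguing through them alone is circular.
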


The proof of Theorem~\ref{thm:consistency} is in supplementary Section~\ref{appendix:proof:thm_consistency}. Assumption~\ref{assum:ident} is an identifiability condition on the full population objective. The generalized-eigenvalue separation in the direction block is a convenient sufficient condition ensuring that each cluster-specific projection is locally identifiable. It does not require all covariance matrices $\{\bSigma_{ij}\}_{j=1}^{n_{i}}$ within a cluster to share a common eigenbasis. A common within-cluster eigenstructure is sufficient and gives a direct principal component interpretation, but the weaker requirement for consistency is uniqueness of the population minimizer, or equivalently adequate separation of the weighted covariance contrast defining $\bar{\bA}_{i}^{*}$ relative to $\bH_{i}^{*}$.


\section{Simulation Study}
\label{sec:sim}

This section evaluates the performance of the proposed approach through simulation studies. In the study, the covariance matrices are generated following the eigendecomposition, $\bSigma_{ij}=\bPi_{i}\bLambda_{ij}\bPi_{i}^\top$, where $\bPi_{i}=(\bpi_{i1},\dots,\bpi_{ip})\in\mathbb{R}^{p\times p}$ is an orthonormal matrix and $\bLambda_{ij}=\diag\{\lambda_{ij1},\dots,\lambda_{ijp}\}\in\mathbb{R}^{p\times p}$ is a diagonal matrix, for $i=1,\dots,m$ and $j=1,\dots,n_{i}$. $\bpi_{ik}$ and $\lambda_{ijk}$ are corresponding eigenvector and eigenvalue, for $k=1,\dots,p$. Two dimensions, D2 and D4, are considered to satisfy the model assumptions and targets to be identified. For these two dimensions, the eigenvectors are generated from von Mises-Fisher distributions with mean direction, $\bpi_{2}$ and $\bpi_{4}$, respectively, and concentration parameter $\kappa$, and the corresponding eigenvalues are generated following the proposed model~\eqref{eq:model}. For the rest dimensions, the eigenvectors are generated from the null space of $\mathrm{span}\{\bpi_{i2},\bpi_{i4}\}$ to ensure the orthogonality and the eigenvalues are randomly generated from a log-normal distribution.
For the regression model~\eqref{eq:model}, three covariates are considered ($q=3+1=4$), where two are fixed-effect covariates ($q_{1}=2$) and one is a random-effect covariate ($q_{2}=1$). One fixed-effect covariate is generated from a Bernoulli distribution with probability $0.5$ to be one, and the other fixed-effect and random-effect covariates are generated from a normal distribution with mean zero and variance $0.5^2$. For all dimensions, the intercept, $\beta_{0}$ exponentially decays from $5$ to $-3$. For D2, $\bbeta_{1}=(1,-0.5)^\top$ and $\beta_{2}=-0.5$; for D4, $\bbeta_{1}=(-1,0.5)^\top$ and $\beta_{2}=0.5$. $\varepsilon_{i}$ and $\vartheta_{i}$ are independently generated from a normal distribution with mean zero and variance $0.1^{2}$. With $\bSigma_{ij}$, $\by_{ijt}$ is generated from a multivariate normal distribution with mean zero and covariance matrix $\bSigma_{ij}$.

Two approaches are considered to identify the covariate-related components and estimate the corresponding model coefficients, the proposed multilevel covariate-assisted principal regression (denoted as \textbf{MCAP}) and an approach derived from the single-level framework proposed in \citet{zhao2021covariate} (denoted as \textbf{SCAP}). The SCAP approach has two steps: (i) apply the covariance regression framework in \citet{zhao2021covariate} to each cluster separately; and (ii) average the estimated linear projections and model coefficients over clusters. For both approaches, the number of components is chosen as $\max\{k:\mathrm{DfD}(\hat{\bGamma}^{(k)})<2\}$. The performance using this criterion was studied in \citet{zhao2021covariate} and a threshold of two was recommended. We therefore do not repeat the evaluation of component-number selection in this study.

For evaluation, two dimension settings of $p=5$ and $p=20$ are considered. The number of clusters is set to $m=20$. $n_{i}$ and $T_{ij}$ are generated from Poisson distributions with mean $n=100,500$ and $T=100,500$, respectively. 
Table~\ref{table:sim} presents the simulation results for one component (D4). For the competing method, SCAP, only results with $n=T=100$ for both dimensions are presented for demonstration. From the table, when $p=5$, the proposed MCAP approach yields a good estimate of the projections and model coefficients. As the sample size increases, the performance improves. Although the SCAP approach performs slightly better in estimating $\bgamma$, the bias in estimating the model coefficients is higher. When the dimension increases to $p=20$, the estimate from SCAP is inaccurate. For the MCAP approach, although the bias in estimating $\bgamma$ is high with $n=T=100$, when the sample sizes increase to $n=T=500$, the performance is much improved. These results suggest the superior performance of the proposed multilevel approach.
Direction similarities are computed using the final Euclidean unit-norm directions, consistent with the von Mises-Fisher model for $\bgamma_{i}$.
Section~\ref{appendix:sub:sim_inference} in the supplementary materials presents the performance of the proposed approaches for inference on the regression coefficients.

\begin{table}
  \caption{\label{table:sim}Performance of the simulation study, including the similarity of $\hat{\bgamma}$ and $\bpi_{j}$ and the estimated standard deviation (SE), bias, and mean squared error (MSE) in estimating $\beta$. Data dimension $p=5,20$, number of clusters $m=20$, sample sizes $n=100,500$ and $T=100,500$.}
  \begin{center}
    \begin{tabular}{l l l l r c r r c r r}
      \hline
      & & & & \multicolumn{1}{c}{$\hat{\bgamma}$} && \multicolumn{2}{c}{$\hat{\beta}_{11}$} && \multicolumn{2}{c}{$\hat{\beta}_{2}$} \\
      \cline{5-5}\cline{7-8}\cline{10-11}
      \multicolumn{1}{c}{\multirow{-2}{*}{$p$}} & \multicolumn{1}{c}{\multirow{-2}{*}{$n$}} & \multicolumn{1}{c}{\multirow{-2}{*}{$T$}} & \multicolumn{1}{c}{\multirow{-2}{*}{Method}} & \multicolumn{1}{c}{$|\langle \hat{\bgamma},\bpi_{j}\rangle|$ (SE)} && \multicolumn{1}{c}{Bias} & \multicolumn{1}{c}{MSE} && \multicolumn{1}{c}{Bias} & \multicolumn{1}{c}{MSE} \\
      \hline
      & & & SCAP & $0.983$ ($0.011$) && $1.503$ & $2.437$ && $-0.234$ & $0.185$ \\
      & \multirow{-2}{*}{$100$} & \multirow{-2}{*}{$100$} & MCAP & $0.921$ ($0.092$) && $0.246$ & $0.160$ && $-0.166$ & $0.040$ \\
      \cline{2-11}
      \multirow{-3}{*}{$5$} & $500$ & $500$ & MCAP & $0.967$ ($0.008$) && $0.082$ & $0.021$ && $-0.052$ & $0.003$ \\
      \hline
      & & & SCAP & $0.276$ ($0.055$) && $1.243$ & $1.650$ && $0.018$ & $0.001$ \\
      & \multirow{-2}{*}{$100$} & \multirow{-2}{*}{$100$} & MCAP & $0.577$ ($0.056$) && $0.102$ & $0.014$ && $0.016$ & $0.001$ \\
      \cline{2-11}
      \multirow{-3}{*}{$20$} & $500$ & $500$ & MCAP & $0.819$ ($0.026$) && $0.101$ & $0.125$ && $-0.067$ & $0.008$ \\
      \hline
    \end{tabular}
  \end{center}
\end{table}

\section{The Human Connectome Project Lifespan Study}
\label{sec:hcp}

We apply the proposed approach to the HCP Lifespan Study, including HCP-Development (HCP-D), HCP-Young Adult (HCP-YA), and HCP-Aging (HCP-A), covering ages from five to ninety. The three cohorts together form a hierarchically nested dataset, with subjects nested within age-defined clusters. Age clusters are defined with a five-year window, resulting in $m=17$ clusters. $1,500$ unrelated subjects ($815$ females and $685$ males) are analyzed and the sample size within each cluster~($n_{i}$) is presented in Table~\ref{appendix:table:hcp_sample}. For each subject, resting-state fMRI time courses are extracted from $p=75$ regions of interest (ROIs) using the Harvard-Oxford Atlas in FSL~\citep{smith2004advances}. The atlas includes $60$ cortical and $15$ subcortical regions, which are grouped into ten functional modules. For HCP-D and HCP-A subjects, the number of time points is $T_{ij}=229$ and for HCP-YA subjects, $T_{ij}=590$. Age, Age$^{2}$, and Age$^{3}$ are included as fixed-effect covariates to capture the non-linear lifespan profile of FC, and sex is included as a random effect to capture the sex difference.
To better interpret the identified components, $\bgamma_{i}$'s and $\bgamma$ are sparsified following an \textit{ad hoc} procedure using a fused lasso regression~\citep{tibshirani2005sparsity} to encourage spatial smoothness across regions and constancy within the same functional module.

Using the proposed DfD criterion, the method identifies a single dominant component, suggesting that age- and sex-related heterogeneity in the FC matrices is concentrated along one principal component. This result implies that the major lifespan-related variation in FC can be summarized parsimoniously by a one-dimensional subnetwork score, estimated by borrowing information across age-defined clusters. 
For this component, the estimated regression coefficients, together with their $95\%$ bootstrap confidence intervals, are presented in Table~\ref{table:hcp} and the estimated FC profile is visualized in Figure~\ref{fig:hcp_agetraj_brain_C1}. The projected FC score declines from childhood and adolescence into the late $20$s, rises steadily through midlife and late adulthood, peaks around the early $70$s, and then declines again from approximately ages $70$ to $90$. This overall shape is shared by both males and females, with males showing higher projected FC values than females across the age range. This observation of sex difference in brain connectivity is in line with existing literature~\citep{zhang2016sex,zhang2024different}. The shape of the aging curve points to a structured decline--increase--decline pattern along the identified subnetwork. This type of nonlinear life-course organization is broadly consistent with recent studies showing that functional connectivity is dynamically reorganized across development, adulthood, and aging rather than changing in a strictly linear pattern~\citep{sun2025human,yang2025lifelong,li2025effective}.

An especially notable feature of the estimated trajectory is the partial mirroring between later life and early development. The descending segment from approximately ages $70$ to $90$ resembles, in direction and broad curvature, the earlier decline observed from childhood into young adulthood. In this sense, the late-life trajectory partially echoes the developmental trajectory, with the FC score moving away from its late-adulthood peak toward levels closer to those seen earlier in life. This mirroring phenomenon suggests that the identified subnetwork is particularly sensitive to brain reorganization at both ends of the lifespan, first during developmental refinement and later during aging-related decline or dedifferentiation. Recent studies of functional gradients, spatial variability, and effective connectivity across the lifespan similarly support nonlinear late-life reconfiguration and age-related dedifferentiation of brain systems~\citep{wang2024aging,shen2025dedifferentiation,yang2025lifelong}.

The chord diagram in Figure~\ref{fig:hcp_avgbrain_chord} further elaborates on this interpretation by showing how the spatial pattern of the identified subnetwork varies across age clusters. There are relatively few strong similarities among subnetworks from age bins below $50$, whereas subnetworks from later adulthood, especially from roughly age $50$ onward, are more frequently linked to one another. This suggests that the identified subnetwork becomes more homogeneous across neighboring age bins in later life, even though the FC score itself continues to change over that period. In other words, later-adulthood subnetworks appear to share a more conserved spatial organization than those observed during childhood, adolescence, and early-to-mid adulthood, where the pattern is more heterogeneous. This suggests that aging-related reorganization may proceed along a more common spatial axis, whereas earlier life stages show greater variability in how that axis is expressed. Such a pattern is consistent with recent evidence for both substantial inter-individual heterogeneity in brain aging and systematic network shifts in later life~\citep{dohm2024middle}.

At the regional level, the loading profile (Figure~\ref{appendix:fig:hcp_avgbrain} in the supplementary materials) indicates that this component is driven most strongly by the rostral middle frontal gyrus~(rMFG), with additional prominent contributions from the middle temporal gyrus~(MTG), the inferior parietal cortex~(IPC), and the isthmus cingulate gyrus~(ICG), as well as the supramarginal gyrus~(SMG) with a slightly lower contribution. The component spans across multiple known brain systems, including frontoparietal, default mode network (DMN), ventral-attention, and lateral temporal regions. 
The dominant frontal and temporal contributions are consistent with recent evidence that age-related functional changes are especially pronounced in the association cortex and control-related systems, including frontoparietal and default-mode circuitry~\citep{pan2024brain,khalilian2024age,stanford2024age}.
The strong contribution of rMFG is consistent with the well-established role of prefrontal systems in executive control and working memory, while the MTG has been repeatedly implicated in semantic and language-related processing. The SMG is plausibly relevant given its established role in phonological processing and verbal working memory. The ICG and IPC contributions are also notable, as posterior midline and inferior parietal areas are core components of the DMN and association cortex systems that show robust age-related FC alterations.

On average across all age clusters, greater differences are observed in the visual, somato-motor, ventral-attention, fronto-parietal and DMN networks (Figure~\ref{fig:hcp_bubble_plot}). This finding indicates that lifespan-related heterogeneity is concentrated not only in higher-order association networks such as the DMN and fronto-parietal control systems, but also in sensory and sensorimotor systems. The involvement of the DMN and fronto-parietal network is consistent with recent work showing that association cortex is especially sensitive to age-related changes in functional organization, including reduced network segregation, altered inter-network interactions, and greater variability in older adults \citep{pan2024brain,yang2025lifelong}. The contribution of the ventral-attention network further suggests age-related modulation of salience and attentional reorienting systems, which have been implicated in both developmental maturation and late-life decline~\citep{shen2025dedifferentiation}. At the same time, the pronounced differences observed in the visual and somato-motor networks indicate that primary and unimodal systems also undergo substantial lifespan reconfiguration, in line with recent connectome-wide studies showing nonlinear age effects in sensory and sensorimotor circuitry~\citep{orlichenko2024somatomotor,sun2025human}. Taken together, our finding suggests that the dominant lifespan-related FC component is not restricted to a single network class, but instead reflects coordinated changes across both association and sensory-motor systems.

The anatomical configuration of these regions further supports an interpretation centered on distributed higher-order cognition rather than isolated regional change. Frontal, parietal, and temporal association cortices are interconnected by major long-range white matter fiber bundles, especially the superior longitudinal fasciculus and arcuate fasciculus complex, which link perisylvian frontal, parietal, and temporal territories and are widely implicated in language and executive processing~\citep{mousley2025topological}. Although the present analysis does not impose any anatomical or network structure a priori, the resulting component aligns with a plausible fronto-parietal-temporal association network.

The proposed MCAP framework is well suited to uncovering lifespan-related variation in both brain organization and within-network FC. By modeling the cluster-specific projection direction, $\bgamma_{i}$, as a random effect on the unit sphere under a von Mises-Fisher distribution, MCAP integrates information across age clusters to estimate a population-level mean direction, $\hat{\bgamma}$, while allowing principled between-cluster heterogeneity in subnetwork orientation. This multilevel structure is critical in the present setting, where a single-level analysis conducted separately within each age cluster (such as SCAP) would have limited power to detect covariate effects within smaller clusters and averaging of cluster-specific estimates would not properly account for between-cluster variability or propagate uncertainty. The bootstrap confidence intervals further provide finite-sample uncertainty quantification. Together, these methodological features enable the findings in the HCP study, including the nonlinear within-network FC profile, the sex difference, the convergence of neural directions in late adulthood, and the identification of a distributed fronto-parietal-temporal subnetwork.

\begin{table}
  \caption{\label{table:hcp}Estimated regression coefficients and $95\%$ bootstrap confidence intervals (CIs) from $500$ bootstrap samples. SE: standard error.}
  \begin{center}
    \begin{tabular}{l r r c}
      \hline
      \multicolumn{1}{c}{Covariate} & \multicolumn{1}{c}{Estimate} & \multicolumn{1}{c}{SE} & \multicolumn{1}{c}{$95\%$ CI} \\
      \hline
      Age & $0.511$ & $0.198$ & $(0.123, 0.898)$ \\
      Age$^{2}$ & $0.732$ & $0.107$ & $(0.522, 0.941)$ \\
      Age$^{3}$ & $-0.358$ & $0.057$ & $(-0.470, -0.245)$ \\
      Male & $0.210$ & $0.100$ & $(0.013, 0.407)$ \\
      \hline
    \end{tabular}
  \end{center}
\end{table}

\begin{figure}
  \begin{center}
    \includegraphics[width=\textwidth]{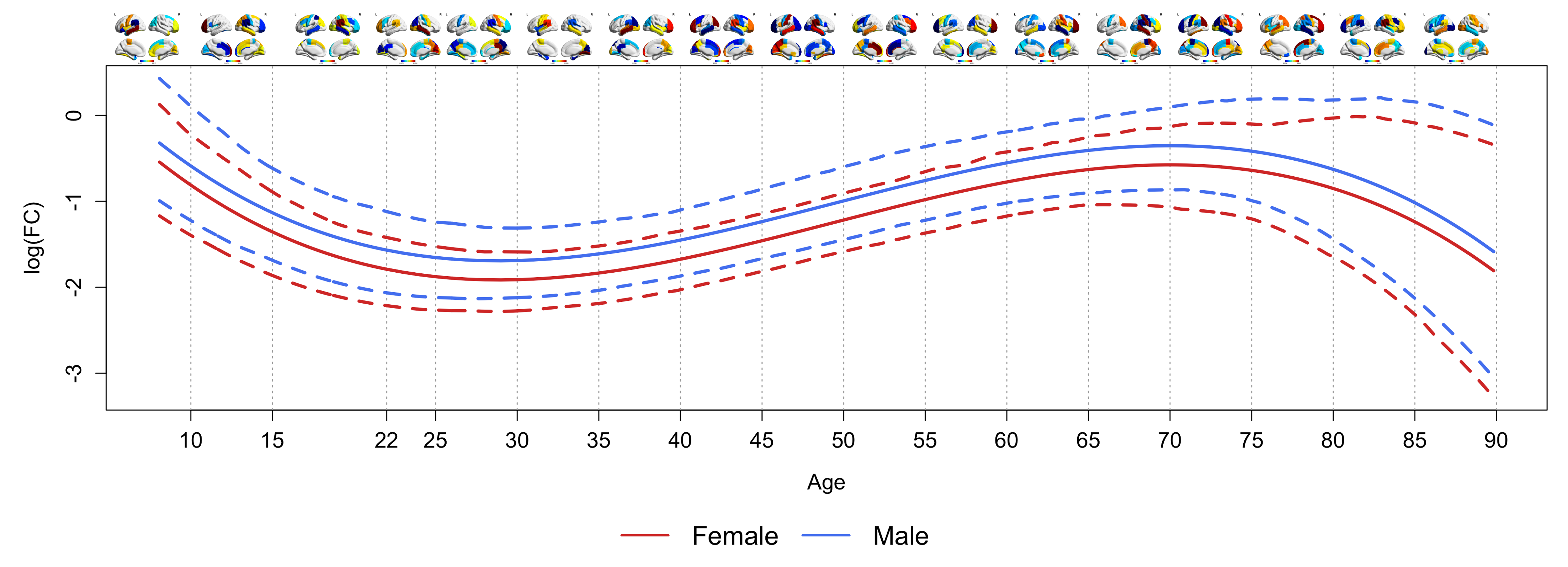}
  \end{center}
  \caption{\label{fig:hcp_agetraj_brain_C1}The profile of functional connectivity within the identified brain subnetwork over age, as well as the brain map of each age cluster, in the HCP study.}
\end{figure}

\begin{figure}
  \subfloat[\label{fig:hcp_avgbrain_chord}]{\includegraphics[width=0.45\textwidth]{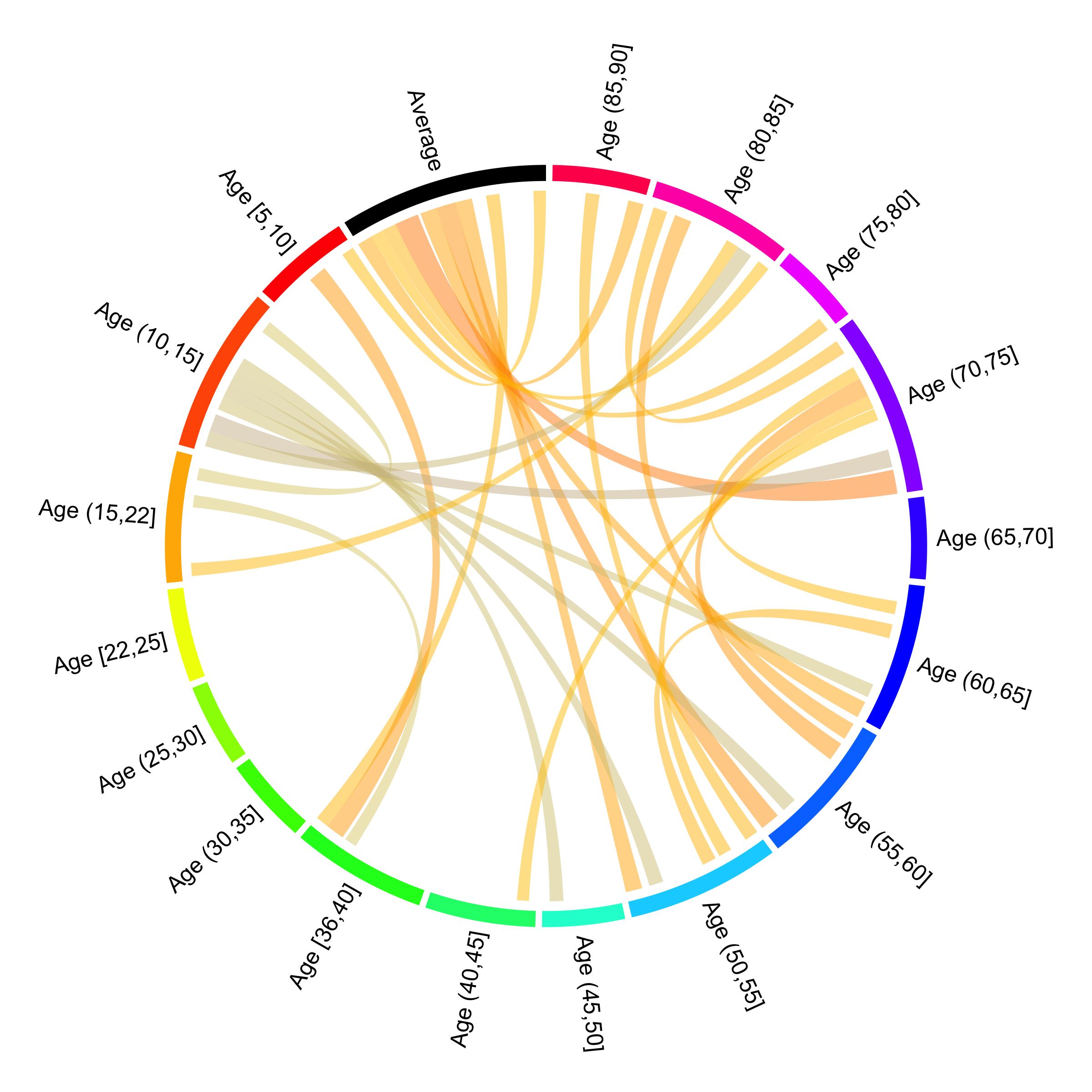}}
  \enskip{}
  \subfloat[\label{fig:hcp_avgbrain_brain}]{\includegraphics[width=0.45\textwidth]{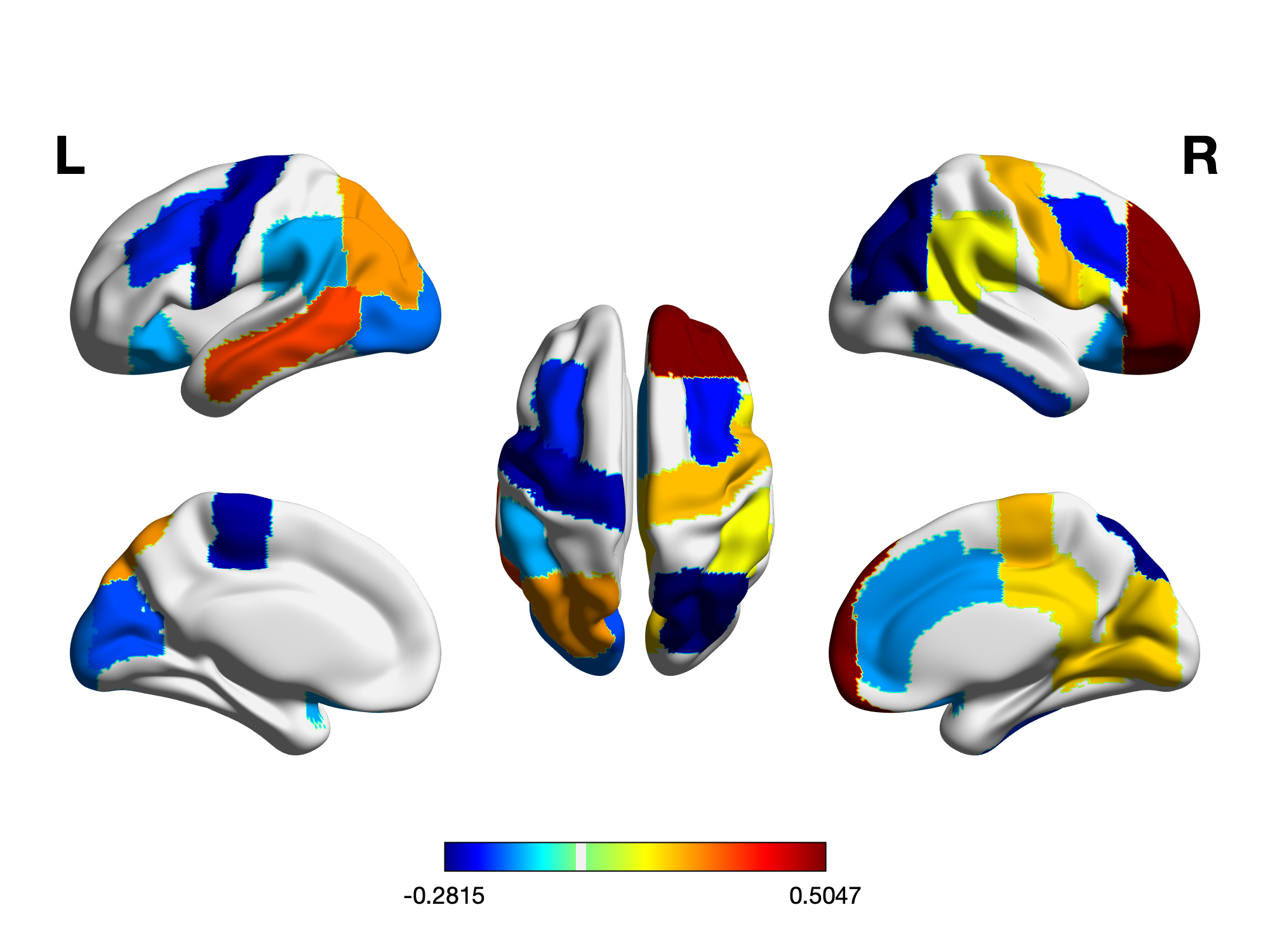}}

  \subfloat[\label{fig:hcp_bubble_plot}]
  {\includegraphics[width=1\textwidth]{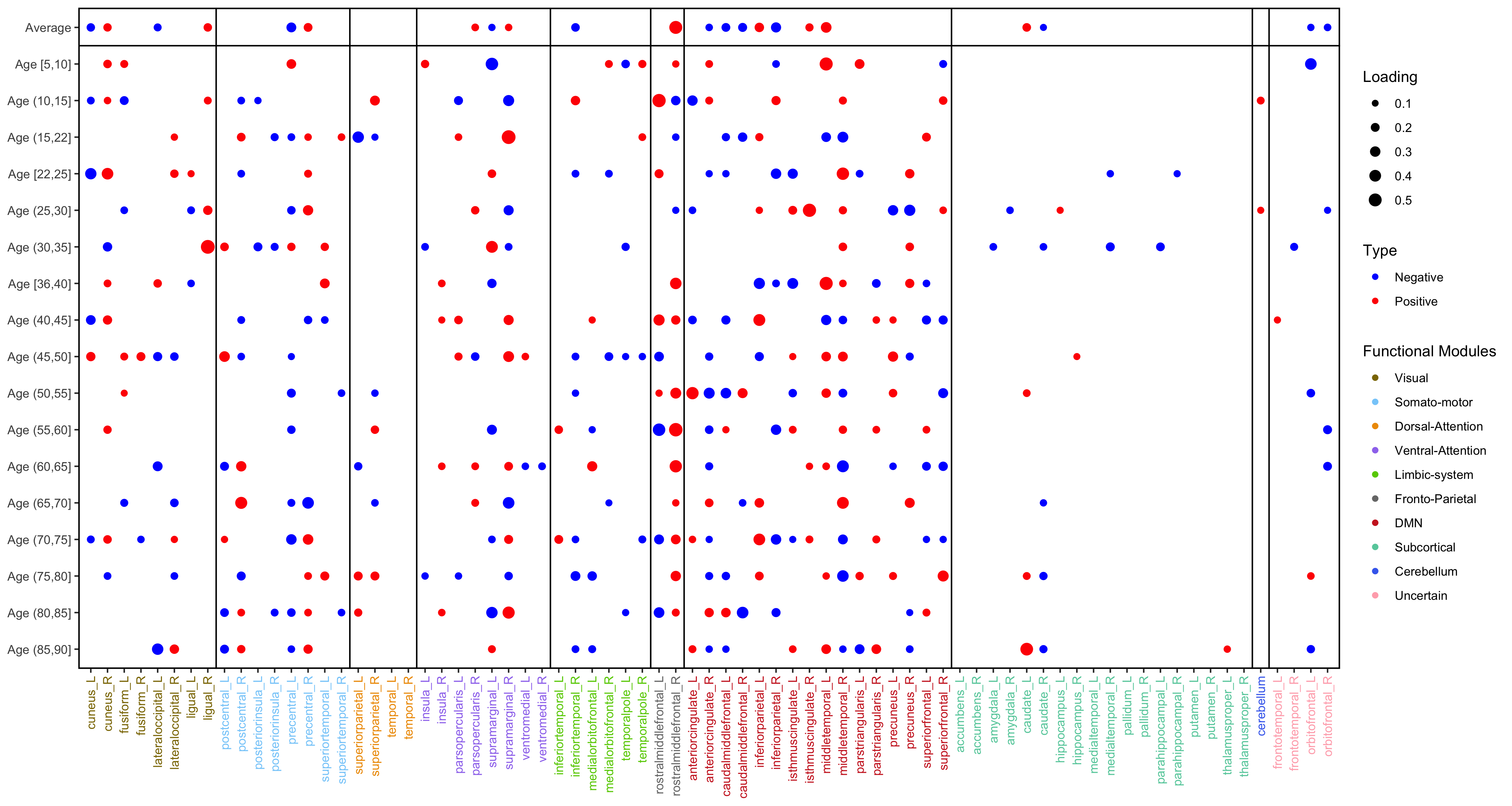}}
  
  \caption{\label{fig:hcp_avgbrain}Result of the average projection of the component identified in the HCP study. (a) Similarity of the identified projections for each age cluster and the average projection. (b) The brain map of the average projection. (c) The bubble plot of the projection by age and functional module.}
\end{figure}

\section{Discussion}
\label{sec:discussion}

In this paper, we propose a multilevel covariate-assisted principal (MCAP) regression framework for hierarchically nested covariance matrix outcomes. The proposed MCAP model extends the single-level CAP framework~\citep{zhao2021covariate} by allowing the covariate-associated projection direction to vary across clusters and by modeling these cluster-specific directions through a von Mises-Fisher distribution. Under the normality assumption, a hierarchical likelihood is introduced to jointly estimate model parameters. A block coordinate-descent algorithm is developed for computation, and a two-stage bootstrap procedure is proposed for inference on the regression parameters. We establish asymptotic normality for the regression parameters when the projection directions are known and consistency of the estimators when the projection directions are unknown. Simulation studies demonstrate that the proposed approach improves the estimation of model coefficients and projection directions compared with a single-level analysis, especially when the data dimension is moderate to high.

The proposed framework provides a useful tool for studying lifespan variation in brain functional connectivity from the HCP Lifespan Study. Existing approaches typically either vectorize the covariance matrix, model each age group separately, or impose a common projection direction across all clusters. These strategies make it difficult to borrow information across age groups while still allowing age-specific variation in the covariance structure. MCAP addresses this limitation by estimating age-cluster-specific projections while simultaneously learning a population-level direction and regression effects. In the HCP analysis, this allows the functional connectivity profile across ages five to ninety to be summarized through a dominant covariate-associated brain subnetwork, while still preserving heterogeneity in the subnetworks across age clusters. The fitted model reveals a nonlinear lifespan trajectory of functional connectivity, a sex difference along the identified direction, and greater similarity of the brain subnetworks among older age groups. The identified subnetwork involves regions in fronto-parietal, default mode, ventral-attention, and lateral temporal systems, including the SMG, rMFG, and MTG, suggesting a distributed cross-network pattern that would be difficult to characterize by analyzing each age cluster independently or by using a fixed projection direction for all clusters.

Several limitations remain. First, the current model assumes that the observations are Gaussian so that the sample covariance matrices provide a likelihood-based summary. In applications where this assumption is violated, quasi-likelihood or robust estimating equation versions of the proposed framework may be useful. Second, the von Mises-Fisher model provides a parsimonious description of the random projection directions, but it is unimodal. Mixtures of spherical distributions could be considered when the directions form multiple clusters. Third, the current theory establishes consistency of the projection directions but does not provide asymptotic normality or direct inference for $\bgamma_{i}$'s and $\bgamma$. Developing such inference would require accounting for the constraint on the directions and the sign indeterminacy of eigenvector-type estimators. Fourth, higher-order components are identified sequentially through deflation, which is computationally convenient but may be less efficient than a joint multi-component procedure. Future work may also consider sparse or structured projections for high-dimensional neuroimaging data, time-varying projection directions for longitudinal cluster processes, formal tests for the number of components, and extensions to non-Gaussian covariance or dependence summaries.



\clearpage

\appendix
\counterwithin{figure}{section}
\counterwithin{table}{section}
\counterwithin{equation}{section}
\counterwithin{lemma}{section}
\counterwithin{theorem}{section}

\section*{Appendix}

This Appendix collects the technical proof of the theorems in the main text, additional theoretical results, and additional data analysis results.

\section{Theory and Proof}
\label{appendix:sec:proof}

\subsection{Proof of Theorem~\ref{thm:asym_known}}
\label{appendix:proof:thm_asym_known}

\begin{proof}

We prove each part in turn. Throughout, let $\mu_{ij}^{*}=\beta_{0i}^{*}+\bx_{1i(j)}^{\top}\bbeta_{1}^{*}+\bx_{2ij}^{\top}\bbeta_{2i}^{*}$, $s_{ij}=\bgamma_{i}^{\top}\bS_{ij}\bgamma_{i}$, and $r_{ij}=s_{ij}\exp(-\mu_{ij}^{*})$. Since $T_{ij}r_{ij}\sim\chi^{2}_{T_{ij}}$, we have $\mathbb{E}r_{ij}=1$ and $\mathrm{Var}(r_{ij})=2/T_{ij}$, and the $r_{ij}$'s are mutually independent.

\textit{Part (i).}
The joint score of the negative hierarchical likelihood with respect to $\bbeta_{1}$, $\{\beta_{0i}\}$, and $\{\bbeta_{2i}\}$, evaluated at the true parameter values, is
\begin{equation}
  S_{n}(\bbeta_{1}^{*}) = \frac{\partial\ell}{\partial\bbeta_{1}}\bigg|_{*} = \sum_{i=1}^{m}\sum_{j=1}^{n_{i}}\frac{T_{ij}}{2}\left(1-r_{ij}\right)\bx_{1i(j)},
  \label{eq:score_beta1}
\end{equation}
and for each cluster $i$,
\[
  \frac{\partial\ell}{\partial\beta_{0i}}\bigg|_{*}=\sum_{j}\frac{T_{ij}}{2}(1-r_{ij})+\frac{\varepsilon_{i}}{\sigma^{2*}},\qquad
  \frac{\partial\ell}{\partial\bbeta_{2i}}\bigg|_{*}=\sum_{j}\frac{T_{ij}}{2}(1-r_{ij})\bx_{2ij}+\bOmega^{*-1}\bvartheta_{i},
\]
where $\varepsilon_{i}=\beta_{0i}^{*}-\beta_{0}^{*}$ and $\bvartheta_{i}=\bbeta_{2i}^{*}-\bbeta_{2}^{*}$. At convergence, the joint score at $(\hat{\bbeta}_{1},\{\hat{\beta}_{0i}\},\{\hat{\bbeta}_{2i}\})$ is zero. Since $\ell$ is the negative hierarchical likelihood, its Hessian blocks are positive. A first-order Taylor expansion around the true values gives
\begin{equation}
  \matzero\approx\frac{\partial\ell}{\partial\bbeta_{1}}\bigg|_{*}+\bH_{11}\left(\hat{\bbeta}_{1}-\bbeta_{1}^{*}\right)+\sum_{i=1}^{m}\bC_{i}\begin{pmatrix}\hat{\beta}_{0i}-\beta_{0i}^{*}\\[2pt]\hat{\bbeta}_{2i}-\bbeta_{2i}^{*}\end{pmatrix},
  \label{eq:taylor_joint}
\end{equation}
where the Hessian blocks $\bH_{11}$ and $\bC_{i}$ are evaluated at $r_{ij}\approx 1$ (as $T\rightarrow\infty$) and are as defined in~\eqref{eq:profile_info}. Likewise, the score equations for $\beta_{0i}$ and $\bbeta_{2i}$ give
\[
  \begin{pmatrix}\hat{\beta}_{0i}-\beta_{0i}^{*}\\[2pt]\hat{\bbeta}_{2i}-\bbeta_{2i}^{*}\end{pmatrix}\approx\bD_{i}^{-1}\!\left[\sum_{j}\frac{T_{ij}}{2}(r_{ij}-1)\begin{pmatrix}1\\\bx_{2ij}\end{pmatrix}-\begin{pmatrix}\varepsilon_{i}/\sigma^{2*}\\\bOmega^{*-1}\bvartheta_{i}\end{pmatrix}-\bC_{i}^{\top}(\hat{\bbeta}_{1}-\bbeta_{1}^{*})\right],
\]
where $\bD_{i}$ is as in~\eqref{eq:profile_info}. Substituting into~\eqref{eq:taylor_joint} and solving for $\hat{\bbeta}_{1}-\bbeta_{1}^{*}$ yields the profile-score representation:
\begin{equation}
  \hat{\bbeta}_{1}-\bbeta_{1}^{*}\approx\mathcal{J}_{n}^{-1}\bw_{n},
  \label{eq:beta1_profile}
\end{equation}
where $\mathcal{J}_{n}=\bH_{11}-\sum_{i}\bC_{i}\bD_{i}^{-1}\bC_{i}^{\top}$ is the profile information matrix defined in~\eqref{eq:profile_info}, and $\bw_{n}$ is the profile score
\[
  \bw_{n}=-S_{n}(\bbeta_{1}^{*})-\sum_{i=1}^{m}\bC_{i}\bD_{i}^{-1}\!\left[\sum_{j}\frac{T_{ij}}{2}(r_{ij}-1)\begin{pmatrix}1\\\bx_{2ij}\end{pmatrix}-\begin{pmatrix}\varepsilon_{i}/\sigma^{2*}\\\bOmega^{*-1}\bvartheta_{i}\end{pmatrix}\right].
\]
Since the $r_{ij}$'s are mutually independent with $\mathbb{E}(r_{ij})=1$ and $\mathrm{Var}(r_{ij})=2/T_{ij}$, and $\varepsilon_{i}$, $\bvartheta_{i}$ are independent of the $r_{ij}$'s, a direct calculation gives $\mathbb{E}(\bw_{n})=\matzero$ and $\mathrm{Var}(\bw_{n})=\mathcal{J}_{n}$. The Lindeberg condition for the multivariate CLT is satisfied under bounded covariates, so $\mathcal{J}_{n}^{-1/2}\bw_{n}\overset{\mathcal{D}}{\longrightarrow}\mathcal{N}(\matzero,\matI_{q_{1}})$. Combining with~\eqref{eq:beta1_profile} and Slutsky's theorem gives $\mathcal{J}_{n}^{1/2}(\hat{\bbeta}_{1}-\bbeta_{1}^{*})\overset{\mathcal{D}}{\longrightarrow}\mathcal{N}(\matzero,\matI_{q_{1}})$.

\textit{Parts (ii) and (iii).}
For each cluster $i$, conditional on the population fixed effects, the score for $\beta_{0i}$ is $\sum_{j}T_{ij}(1-r_{ij})/2+(\beta_{0i}-\beta_{0})/\sigma^{2}$, and the corresponding observed Fisher information is $\sum_{j}T_{ij}r_{ij}/2+1/\sigma^{2}$. Since $r_{ij}\overset{\mathcal{P}}{\longrightarrow}1$, the information is approximately $N_{i}/2+1/\sigma^{2}$, where $N_{i}=\sum_{j}T_{ij}$. The joint profiling over $\bbeta_{1}$ contributes terms linear in $\hat{\bbeta}_{1}-\bbeta_{1}^{*}$. Under the centered fixed-effect parametrization for the intercept, these terms average to $o_{p}(m^{-1/2})$ in the hyperparameter updates. Therefore, by a standard Newton-Raphson argument,
\begin{equation}
  \hat{\beta}_{0i}=\beta_{0i}^{*}+O_{p}(N_{i}^{-1/2}).
  \label{eq:b0i_rate}
\end{equation}
Since $\hat{\beta}_{0}=m^{-1}\sum_{i}\hat{\beta}_{0i}$ and $\beta_{0i}^{*}=\beta_{0}^{*}+\varepsilon_{i}$ with $\varepsilon_{i}\sim\mathcal{N}(0,\sigma^{2*})$,
\[
  \sqrt{m}\left(\hat{\beta}_{0}-\beta_{0}^{*}\right)=m^{-1/2}\sum_{i=1}^{m}\varepsilon_{i}+O_{p}\!\left(\sqrt{\frac{m}{nT}}\right).
\]
Under $m/(nT)\rightarrow 0$, the second term is $o_{p}(1)$, and the CLT applied to $\{\varepsilon_{i}\}$ gives $\sqrt{m}(\hat{\beta}_{0}-\beta_{0}^{*})\overset{\mathcal{D}}{\longrightarrow}\mathcal{N}(0,\sigma^{2*})$.

For $\hat{\sigma}^{2}=m^{-1}\sum_{i}(\hat{\beta}_{0i}-\hat{\beta}_{0})^{2}$, using~\eqref{eq:b0i_rate},
\[
  \hat{\beta}_{0i}-\hat{\beta}_{0}=(\varepsilon_{i}-\bar{\varepsilon})+O_{p}(N_{i}^{-1/2}),
\]
where $\bar{\varepsilon}=m^{-1}\sum_{i}\varepsilon_{i}$. Under $m/(nT)\rightarrow 0$, $\hat{\sigma}^{2}\approx m^{-1}\sum_{i}(\varepsilon_{i}-\bar{\varepsilon})^{2}$, which is the MLE of the variance of random variables from $\mathcal{N}(0,\sigma^{2*})$. By the CLT, $\sqrt{m}(\hat{\sigma}^{2}-\sigma^{2*})\overset{\mathcal{D}}{\longrightarrow}\mathcal{N}(0,2\sigma^{4*})$.

\textit{Parts (iv) and (v).}
The same profiling argument applies to the random slopes. Under the centered fixed-effect parametrization, the first-order contribution of $\hat{\bbeta}_{1}-\bbeta_{1}^{*}$ to the empirical mean and covariance of the estimated random slopes is $o_{p}(m^{-1/2})$. Thus, with $\bvartheta_{i}=\bbeta_{2i}^{*}-\bbeta_{2}^{*}\sim\mathcal{N}(\matzero,\bOmega^{*})$, we have $\sqrt{m}(\hat{\bbeta}_{2}-\bbeta_{2}^{*})\overset{\mathcal{D}}{\longrightarrow}\mathcal{N}(\matzero,\bOmega^{*})$. For $\hat{\bOmega}=m^{-1}\sum_{i}(\hat{\bbeta}_{2i}-\hat{\bbeta}_{2})(\hat{\bbeta}_{2i}-\hat{\bbeta}_{2})^{\top}$, this is asymptotically equivalent to the sample covariance of random vectors from $\mathcal{N}(\matzero,\bOmega^{*})$. Standard results for the sample covariance matrix yield $\sqrt{m}\,\mathrm{vec}(\hat{\bOmega}-\bOmega^{*})\overset{\mathcal{D}}{\longrightarrow}\mathcal{N}\{\matzero,(\matI_{q_{2}^{2}}+\bK)(\bOmega^{*}\otimes\bOmega^{*})\}$, where $\bK$ is the $q_{2}^{2}\times q_{2}^{2}$ commutation matrix. 

\end{proof}

The profile information $\mathcal{J}_{n}$ depends on the covariate structure. Two limiting cases are instructive. (a) If $\bx_{1i(j)}$ has no between-cluster variation, i.e., $\sum_{j}T_{ij}\bx_{1i(j)}=N_{i}\bar{\bx}_{1i}$ with $\bar{\bx}_{1i}=\matzero$ for all $i$, the cross-Hessian $\bC_{i}$ vanishes and $\mathcal{J}_{n}=\bH_{11}\sim M_{n}\bQ/2$, so $\mathrm{Var}(\hat{\bbeta}_{1})\sim 2\bQ^{-1}/M_{n}$, converging at the rate of $\sqrt{M_{n}}$. (b) If $\bx_{1i(j)}=\bar{\bx}_{1i}$ is constant within each cluster and $q_{2}=0$, the Schur complement reduces to $\mathcal{J}_{n}\rightarrow m\bQ_{B}/\sigma^{2*}$ as $T\rightarrow\infty$, where $\bQ_{B}=m^{-1}\sum_{i}\bar{\bx}_{1i}\bar{\bx}_{1i}^{\top}$, giving $\mathrm{Var}(\hat{\bbeta}_{1})\sim\sigma^{2*}\bQ_{B}^{-1}/m$ at the rate of $\sqrt{m}$. In general, $\mathcal{J}_{n}$ interpolates between these extremes depending on the within- and between-cluster covariate variation.

\subsection{Proof of Theorem~\ref{thm:consistency}}
\label{appendix:proof:thm_consistency}

\begin{proof}
We prove consistency by applying the argmin theorem to the full normalized objective, rather than by treating the direction and regression blocks separately. Let $\bTheta$ collect all unknown parameters in Algorithm~\ref{alg:mlcap}, and let $\bar{\ell}_{n}(\bTheta)=M_{n}^{-1}\ell(\bTheta)$. By Assumption~\ref{assum:ident}, the parameter space is compact, $\bar{\ell}_{n}$ converges uniformly in probability to a deterministic population objective $\bar{\ell}$, and $\bar{\ell}$ has the unique minimizer $\bTheta^*$. Based on the result in \citet{vaart1998asymptotic},
\[
  \hat{\bTheta}\overset{\mathcal{P}}{\longrightarrow}\bTheta^*.
\]
Taking the relevant coordinates yields consistency of the regression parameters and variance components.

It remains only to clarify the direction block. For fixed $i$, the sample working-direction update is driven by
\[
  \bar{\bA}_{i}=\frac{1}{2N_{i}}\sum_{j=1}^{n_{i}}T_{ij}\exp(-\hat{\mu}_{ij})\bS_{ij},\qquad
  \bH_{i}=\frac{1}{N_{i}}\sum_{j=1}^{n_{i}}T_{ij}\bS_{ij}.
\]
Uniform convergence of the full objective implies $\hat{\mu}_{ij}$ is evaluated in a shrinking neighborhood of $\mu_{ij}^*$, and the law of large numbers gives $\bar{\bA}_{i}\overset{\mathcal{P}}{\longrightarrow}\bar{\bA}_{i}^{*}$ and $\bH_{i}\overset{\mathcal{P}}{\longrightarrow}\bH_{i}^*$. The von Mises-Fisher linear term in the unnormalized $\tilde{\bgamma}_{i}$ update is $O(1)$ whereas the quadratic term is $O(N_{i})$. Equivalently, after division by $N_{i}$, the linear term is $O(N_{i}^{-1})$ and is asymptotically negligible for the cluster-specific direction. The simple generalized-eigenvalue condition in Assumption~\ref{assum:ident}, together with continuity of isolated eigenvectors, then gives $\hat{\tilde{\bgamma}}_{i}\overset{\mathcal{P}}{\longrightarrow}\tilde{\bgamma}_{i}^*$ for each fixed $i$, under the chosen sign convention. Since Euclidean normalization is continuous away from zero,
\[
  \hat{\bgamma}_{i}=\hat{\tilde{\bgamma}}_{i}/\|\hat{\tilde{\bgamma}}_{i}\|_2
  \overset{\mathcal{P}}{\longrightarrow}
  \tilde{\bgamma}_{i}^*/\|\tilde{\bgamma}_{i}^*\|_2=\bgamma_{i}^*.
\]

Finally, since the $\bgamma_{i}^*$'s are independent von Mises-Fisher random vectors with mean direction $\bgamma^*$ and concentration $\kappa^*$, the sample resultant based on the consistently estimated directions satisfies
\[
  \bar{\bgamma}=m^{-1}\sum_{i=1}^{m}\hat{\bgamma}_{i}\overset{\mathcal{P}}{\longrightarrow}
  \mathbb{E}(\bgamma_{i})
  =A_p(\kappa^*)\bgamma^*,
\]
where $A_p(\kappa)=I_{p/2}(\kappa)/I_{p/2-1}(\kappa)$. Thus $\hat{\bgamma}=\bar{\bgamma}/\|\bar{\bgamma}\|\overset{\mathcal{P}}{\longrightarrow}\bgamma^*$. If $\kappa$ is obtained by the exact von Mises-Fisher likelihood equation $A_p(\hat{\kappa})=\|\bar{\bgamma}\|$, continuity of $A_p^{-1}$ gives $\hat{\kappa}\overset{\mathcal{P}}{\longrightarrow}\kappa^*$. The closed-form update in Algorithm~\ref{alg:mlcap} is the usual numerical approximation to this inverse; its limit is the corresponding deterministic approximation unless the exact inverse is used in the final update. 
\end{proof}

\section{Details of Algorithm~\ref{alg:mlcap}}
\label{appendix:sec:alg_mlcap}

This section presents the computational details for Algorithm~\ref{alg:mlcap}. For $\{\beta_{0i}\}$, $\bbeta_{1}$, and $\{\bbeta_{2i}\}$, the updates can be obtained using the Newton-Raphson method. For $\beta_{0i}$,
\[
  \beta_{0i}^{(s+1)}=\beta_{0i}^{(s)}-\frac{\partial\ell/\partial\beta_{0i}^{(s)}}{\partial^{2}\ell/\partial\beta_{0i}^{(s)2}},
\]
where
\begin{eqnarray*}
  \frac{\partial\ell}{\partial\beta_{0i}} &=& \sum_{j=1}^{n_{i}}\frac{T_{ij}}{2}\left\{1-(\bgamma_{i}^\top\bS_{ij}\bgamma_{i})\exp(-\beta_{0i}-\bx_{1i(j)}^\top\bbeta_{1}-\bx_{2ij}^\top\bbeta_{2i}) \right\}+\frac{\beta_{0i}-\beta_{0}}{\sigma^{2}}, \\
  \frac{\partial^{2}\ell}{\partial\beta_{0i}^{2}} &=& \sum_{j=1}^{n_{i}}\frac{T_{ij}}{2}\left\{(\bgamma_{i}^\top\bS_{ij}\bgamma_{i})\exp(-\beta_{0i}-\bx_{1i(j)}^\top\bbeta_{1}-\bx_{2ij}^\top\bbeta_{2i}) \right\}+\frac{1}{\sigma^{2}}.
\end{eqnarray*}
For $\bbeta_{1}$,
\[
  \bbeta_{1}^{(s+1)}=\bbeta_{1}^{(s)}-\left(\frac{\partial^{2}\ell}{\partial\bbeta_{1}^{(s)}\partial\bbeta_{1}^{(s)\top}} \right)^{-1}\frac{\partial\ell}{\partial\bbeta_{1}^{(s)}},
\]
where
\begin{eqnarray*}
  \frac{\partial\ell}{\partial\bbeta_{1}} &=& \sum_{i=1}^{m}\sum_{j=1}^{n_{i}}\frac{T_{ij}}{2}\left\{\bx_{1i(j)}-(\bgamma_{i}^\top\bS_{ij}\bgamma_{i})\exp(-\beta_{0i}-\bx_{1i(j)}^\top\bbeta_{1}-\bx_{2ij}^\top\bbeta_{2i})\bx_{1i(j)} \right\}, \\
  \frac{\partial^{2}\ell}{\partial\bbeta_{1}\partial\bbeta_{1}^\top} &=& \sum_{i=1}^{m}\sum_{j=1}^{n_{i}}\frac{T_{ij}}{2}\left\{(\bgamma_{i}^\top\bS_{ij}\bgamma_{i})\exp(-\beta_{0i}-\bx_{1i(j)}^\top\bbeta_{1}-\bx_{2ij}^\top\bbeta_{2i})\bx_{1i(j)}\bx_{1i(j)}^\top\right\}.
\end{eqnarray*}
For $\bbeta_{2i}$,
\[
  \bbeta_{2i}^{(s+1)}=\bbeta_{2i}^{(s)}-\left(\frac{\partial^{2}\ell}{\partial\bbeta_{2i}^{(s)}\partial\bbeta_{2i}^{(s)\top}} \right)^{-1}\frac{\partial\ell}{\partial\bbeta_{2i}^{(s)}},
\]
where
\begin{eqnarray*}
  \frac{\partial\ell}{\partial\bbeta_{2i}} &=& \sum_{j=1}^{n_{i}}\frac{T_{ij}}{2}\left\{\bx_{2ij}-(\bgamma_{i}^\top\bS_{ij}\bgamma_{i})\exp(-\beta_{0i}-\bx_{1i(j)}^\top\bbeta_{1}-\bx_{2ij}^\top\bbeta_{2i})\bx_{2ij} \right\}+\bOmega^{-1}(\bbeta_{2i}-\bbeta_{2}), \\
  \frac{\partial^{2}\ell}{\partial\bbeta_{2i}\partial\bbeta_{2i}^\top} &=& \sum_{j=1}^{n_{i}}\frac{T_{ij}}{2}\left\{(\bgamma_{i}^\top\bS_{ij}\bgamma_{i})\exp(-\beta_{0i}-\bx_{1i(j)}^\top\bbeta_{1}-\bx_{2ij}^\top\bbeta_{2i})\bx_{2ij}\bx_{2ij}^\top \right\}+\bOmega^{-1}.
\end{eqnarray*}
For the working direction $\tilde{\bgamma}_{i}$, let
\[
  \bA_{i}=\sum_{j=1}^{n_{i}}\frac{T_{ij}}{2}\exp(-\beta_{0i}-\bx_{1i(j)}^\top\bbeta_{1}-\bx_{2ij}^\top\bbeta_{2i})\bS_{ij},
\]
solving the working problem in~\eqref{eq:opt_gamma} amounts to solving
\begin{eqnarray*}
  \text{minimize} && \tilde{\bgamma}_{i}^\top\bA_{i}\tilde{\bgamma}_{i}-\kappa\bgamma^\top\tilde{\bgamma}_{i}, \\
  \text{such that} && \tilde{\bgamma}_{i}^\top\bH_{i}\tilde{\bgamma}_{i}=1.
\end{eqnarray*}
The Lagrangian form is
\[
  \mathcal{L}_{i}=\tilde{\bgamma}_{i}^\top\bA_{i}\tilde{\bgamma}_{i}-\kappa\bgamma^\top\tilde{\bgamma}_{i}+\lambda_{i}(\tilde{\bgamma}_{i}^\top\bH_{i}\tilde{\bgamma}_{i}-1).
\]
\[
  \frac{\partial\mathcal{L}_{i}}{\partial\tilde{\bgamma}_{i}}=2(\bA_{i}+\lambda_{i}\bH_{i})\tilde{\bgamma}_{i}-\kappa\bgamma=\matzero \quad \Rightarrow \quad \tilde{\bgamma}_{i}=\frac{\kappa}{2}(\bA_{i}+\lambda_{i}\bH_{i})^{-1}\bgamma
\]
\[
  \frac{\partial\mathcal{L}_{i}}{\partial\lambda_{i}}=\tilde{\bgamma}_{i}^\top\bH_{i}\tilde{\bgamma}_{i}-1=0
\]
When $\kappa=0$, the linear term vanishes and the problem reduces to minimizing $\tilde{\bgamma}_{i}^\top\bA_{i}\tilde{\bgamma}_{i}$ subject to $\tilde{\bgamma}_{i}^\top\bH_{i}\tilde{\bgamma}_{i}=1$, which is a generalized eigenvalue problem: the solution is the generalized eigenvector of $(\bA_{i},\bH_{i})$ corresponding to the smallest generalized eigenvalue. When $\kappa>0$, the linear term $-\kappa\bgamma^\top\tilde{\bgamma}_{i}$ perturbs the quadratic problem, and the KKT solution is no longer a single eigenvector in general. Nevertheless, the generalized eigenvectors of $(\bA_{i},\bH_{i})$ form a complete $\bH_{i}$-orthonormal basis for $\mathbb{R}^{p}$, each satisfying the constraint, and provide a natural set of candidate solutions.

Specifically, let $\bxi_{01},\dots,\bxi_{0p}$ be the eigenvectors of the symmetric matrix $\bH_{i}^{-1/2}\bA_{i}\bH_{i}^{-1/2}$, and define $\bxi_{s}=\bH_{i}^{-1/2}\bxi_{0s}$, for $s=1,\dots,p$. Then $\bA_{i}\bxi_{s}=d_{s}\bH_{i}\bxi_{s}$ and $\bxi_{s}^\top\bH_{i}\bxi_{s}=1$, so each $\bxi_{s}$ is feasible. Since eigenvectors are determined only up to sign, both $\pm\bxi_{s}$ are evaluated. The working update is then selected as
\[
  \hat{\tilde{\bgamma}}_{i}=\underset{\bxi\in\{\pm\bxi_{1},\dots,\pm\bxi_{p}\}}{\arg\min} \bxi^\top\bA_{i}\bxi-\kappa\bgamma^\top\bxi.
\]
The direction used in the likelihood and in the von Mises-Fisher update is then $\hat{\bgamma}_{i}=\hat{\tilde{\bgamma}}_{i}/\|\hat{\tilde{\bgamma}}_{i}\|_2$. For $\kappa=0$ this recovers the exact generalized eigenvalue solution up to Euclidean rescaling. For $\kappa>0$ this eigenvector search gives a finite candidate approximation embedded within the block coordinate descent iterations of Algorithm~\ref{alg:mlcap}, so that the remaining parameters adjust to accommodate the discrete selection at each step.

\section{Additional simulation results}
\label{appendix:sec:sim}

\subsection{Inference on regression coefficients}
\label{appendix:sub:sim_inference}

We evaluate the inference performance following both the proposed bootstrap procedure in Section~\ref{sub:inference} and the asymptotic normality result in Theorem~\ref{thm:asym_known}. Following the same simulation setting as in Section~\ref{sec:sim}, Table~\ref{appendix:table:sim_inference} presents the coverage probability of 95\% confidence intervals. The bootstrap procedure achieves better coverage than the asymptotic normality-based method. As the sample size increases, the performance of the two procedures improves and converges. Neither approach accounts for variation in estimating $\bgamma$'s when performing inference on the model coefficients.

\begin{table}
  \caption{\label{appendix:table:sim_inference}Coverage probability ($\%$) of 95\% confidence intervals for $\beta_{11}^{*}$ and $\beta_{21}^{*}$ based on the bootstrap procedure with $500$ replications (Bootstrap) and the asymptotic normality result (Asymptotic). Data dimension $p=5,20$, number of clusters $m=20$, sample sizes $n=100,500$ and $T=100,500$.}
  \begin{center}
    \begin{tabular}{c c c r r c r r}
      \hline
      & & & \multicolumn{2}{c}{$\hat{\beta}_{11}$} & & \multicolumn{2}{c}{$\hat{\beta}_{21}$} \\
      \cline{4-5} \cline{7-8}
      \multirow{-2}{*}{$p$} & \multirow{-2}{*}{$n$} & \multirow{-2}{*}{$T$} & \multicolumn{1}{c}{Bootstrap} & \multicolumn{1}{c}{Asymptotic} & & \multicolumn{1}{c}{Bootstrap} & \multicolumn{1}{c}{Asymptotic} \\
      \hline
      & $100$ & $100$ & $78.6$ & $82.3$ && $54.7$ & $64.6$ \\
      \multirow{-2}{*}{$5$} & $500$ & $500$ & $94.0$ & $75.0$ && $100.0$ & $100.0$ \\
      \hline
      & $100$ & $100$ & $57.6$ & $40.9$ && $100.0$ & $100.0$ \\
      \multirow{-2}{*}{$20$} & $500$ & $500$ & $97.5$ & $77.0$ && $93.5$ & $95.0$ \\
      \hline
    \end{tabular}
  \end{center} 
\end{table}

\section{Additional results of the HCP Lifespan study}
\label{appendix:sec:hcp}

\subsection{Data}
\label{appendix:sub:hcp_data}

Sample size within age-defined clusters in the HCP Lifespan Study is presented in Table~\ref{appendix:table:hcp_sample}. The total number of subjects is $1,500$.
\begin{table}
  \caption{\label{appendix:table:hcp_sample}Sample size within age-defined clusters in the HCP Lifespan Study.}
  \begin{center}
    \begin{tabular}{c r c c c}
      \hline
      \multicolumn{1}{c}{Age} & \multicolumn{1}{c}{$n_{i}$} & \multicolumn{1}{c}{Female} & \multicolumn{1}{c}{Male} & \multicolumn{1}{c}{Study} \\
      \hline
      $[5,10]$ & $76$ & $49$ & $27$ & HCP-D \\
      $(10,15]$ & $196$ & $103$ & $93$ & HCP-D \\
      $(15,22]$ & $231$ & $121$ & $110$ & HCP-D \\
      $[22,25]$ & $108$ & $41$ & $67$ & HCP-YA \\
      $(25,30]$ & $185$ & $106$ & $79$ & HCP-YA \\
      $(30,35]$ & $141$ & $92$ & $49$ & HCP-YA \\
      $[36,40]$ & $63$ & $30$ & $33$ & HCP-A \\
      $(40,45]$ & $66$ & $40$ & $26$ & HCP-A \\
      $(45,50]$ & $65$ & $34$ & $31$ & HCP-A \\
      $(50,55]$ & $66$ & $39$ & $27$ & HCP-A \\
      $(55,60]$ & $63$ & $35$ & $28$ & HCP-A \\
      $(60,65]$ & $53$ & $27$ & $26$ & HCP-A \\
      $(65,70]$ & $51$ & $29$ & $22$ & HCP-A \\
      $(70,75]$ & $46$ & $25$ & $21$ & HCP-A \\
      $(75,80]$ & $29$ & $12$ & $17$ & HCP-A \\
      $(80,85]$ & $37$ & $20$ & $17$ & HCP-A \\
      $(85,90]$ & $24$ & $12$ & $12$ & HCP-A \\
      \hline
      Total & $1,500$ & $815$ & $685$ & \\
      \hline
    \end{tabular}
  \end{center}
\end{table}

\subsection{Additional results}
\label{appendix:sub:hcp_results}

Figure~\ref{appendix:fig:hcp_gamma_sparse} presents the loading profile of the identified average brain network and Figure~\ref{appendix:fig:hcp_riverplot} presents the river plot of each functional model.

\begin{figure}
  \subfloat[\label{appendix:fig:hcp_gamma_sparse}]{\includegraphics[width=0.66\textwidth]{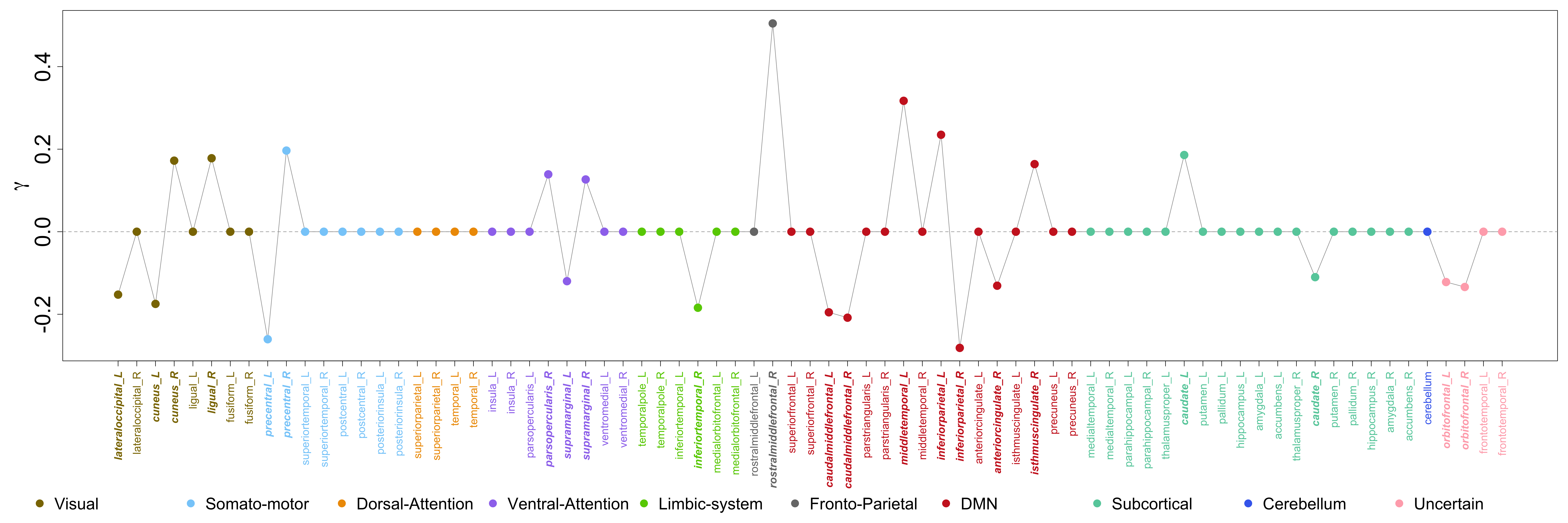}}
  \subfloat[\label{appendix:fig:hcp_riverplot}]{\includegraphics[width=0.33\textwidth]{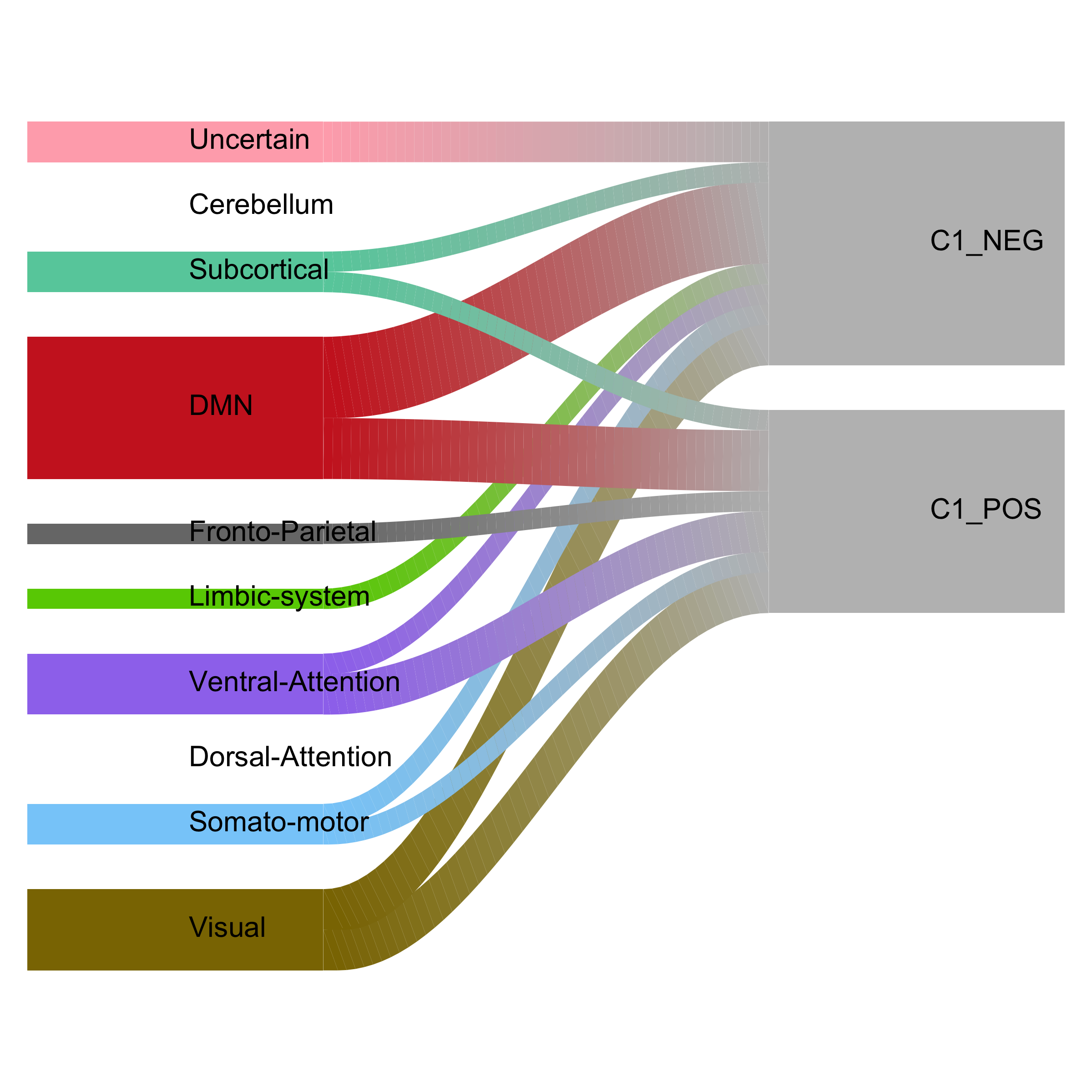}}
  \caption{\label{appendix:fig:hcp_avgbrain}The average projection of the component identified in the HCP study. (a) The river plot of the average projection by functional module. (b) The estimated $\bgamma$ value of the average projection.}
\end{figure}


\bibliographystyle{apalike}
\bibliography{Bibliography}

\end{document}